\theoremstyle{plain}
\newtheorem{thm}{\protect\theoremname}
\theoremstyle{definition}
\theoremstyle{remark}
\theoremstyle{plain}
\theoremstyle{plain}
\providecommand{\definitionname}{Definition}
\providecommand{\lemmaname}{Lemma}
\providecommand{\propositionname}{Proposition}
\providecommand{\remarkname}{Remark}
\providecommand{\theoremname}{Theorem}
\newcommand {\aplt} {\ {\raise-.5ex\hbox{$\buildrel<\over{\mbox{\scriptsize $\sim$}}$}}\ }
\definecolor{Gray}{gray}{0.9}
\begin{document}
\title{FedReverse: Multiparty Reversible Deep Neural Network Watermarking}
 
\author{Junlong Mao,  Huiyi Tang, Yi Zhang, Fengxia Liu, Zhiyong Zheng and Shanxiang Lyu
	\thanks{
	Junlong Mao,  Huiyi Tang and Shanxiang Lyu are with the College of Cyber Security, Jinan
		University, Guangzhou 510632, China (Emails: maojunlong@stu2022.jnu.edu.cn,
		wt20180112@stu2019.jnu.edu.cn, lsx07@jnu.edu.cn). Yi Zhang, Fengxia Liu and Zhiyong Zheng are with the Engineering Research Center of Ministry of Education for Financial Computing and Digital Engineering, Renmin University of China, Beijing 100872, China (Emails: ethanzhang@ruc.edu.cn, shunliliu@buaa.edu.cn, zhengzy@ruc.edu.cn). 
	}
}

\maketitle

\begin{abstract}
	The proliferation of Deep Neural Networks (DNN) in commercial applications is expanding rapidly. Simultaneously, the increasing complexity and cost of training DNN models have intensified the urgency surrounding the protection of intellectual property associated with these trained models.
In this regard, DNN watermarking has emerged as a crucial safeguarding technique. This paper presents FedReverse, a novel multiparty reversible watermarking approach for robust copyright protection while minimizing performance impact. Unlike existing methods, FedReverse enables collaborative watermark embedding from multiple parties after model training, ensuring individual copyright claims. In addition, FedReverse is reversible, enabling complete watermark removal with unanimous client consent.  FedReverse demonstrates perfect covering, ensuring that observations of watermarked content do not reveal any information about the hidden watermark. Additionally, it showcases resistance against Known Original Attacks (KOA), making it highly challenging for attackers to forge watermarks or infer the key. This paper further evaluates FedReverse through comprehensive simulations involving Multi-layer Perceptron (MLP) and Convolutional Neural Networks (CNN) trained on the MNIST dataset. The simulations demonstrate FedReverse's robustness, reversibility, and minimal impact on model accuracy across varying embedding parameters and multiple client scenarios.
\end{abstract}

\begin{IEEEkeywords} Deep Neural Networks (DNN),
	Reversible Watermarking,
	Multiparty Watermarking,
	Intellectual Property Protection,
	Model Security.
\end{IEEEkeywords}

\section{Introduction}\label{Sec. introduction}
The soaring popularity of Deep Neural Networks (DNN) can be attributed to their outstanding performance in various domains \cite{lecun2015deep,kamilaris2018deep, guo2016deep, samek2021explaining,tang2021steganography}. However, the widespread adoption of DNNs has raised concerns regarding unauthorized model usage and a lack of proper attribution to their creators \cite{DBLP:conf/uss/JagielskiCBKP20,DBLP:conf/crypto/CarliniJM20,DBLP:conf/ccs/JuutiAA19}. In response to these challenges, the field of DNN watermarking has emerged as a vital means of safeguarding the intellectual property embedded within these models \cite{barni2021challenge}. Watermarking offers an additional layer of security that enables creators to assert ownership, defend models against unauthorized access and tampering, trace their origins, ensure data integrity, manage versions, and detect malicious usage \cite{DBLP:conf/uss/AdiBCPK18,regazzoni2021protecting,DBLP:conf/iclr/KrishnaTPPI20,DBLP:conf/asplos/RouhaniCK19}.

To protect the intellectual property rights of  DNNs, a range of DNN watermarking techniques have been developed, including \textit{parameter-based} watermarking and \textit{backdoor-based} watermarking, which are discussed in references \cite{li2021survey,zhang2018protecting}.
i) Parameter-based watermarking methods involve the embedding of personalized watermarks into the parameters or their distribution within the DNN model, as elaborated in references \cite{uchida2017embedding,regazzoni2021protecting,zhang2018protecting,1227616}, and \cite{qin2022lattice}. While this approach allows for the conveyance of multiple bits of information, it requires access to the inner workings of the suspected model, known as white-box access, for watermark extraction.
ii) Backdoor-based watermarking methods, as elucidated in references \cite{li2021survey,gong2020privacy,lyu2023optimized,qin2023reversible}, exploit backdoor attacks \cite{li2016multiple} to introduce specific triggers that can identify the model's ownership. However, backdoor-based watermarking typically results in a zero-bit watermark, which means that it only indicates the presence or absence of the watermark itself, rather than representing an identity in the form of a bit string \cite{DBLP:journals/corr/abs-2208-14127}. This verification can be achieved even with limited information, known as black-box access.

Recent research has also explored the use of  watermarks to protect copyright in the context of Federated Learning (FL) \cite{gong2020privacy, zhang2021survey,DBLP:journals/titb/HanJWQD23,chen2023fedright}. WAFFLE \cite{tekgul2021waffle} seems to be the first DNN watermarking technique for FL. It assigns the Server the responsibility of integrating backdoor-based watermarks into the FL model. Clients cannot backdoor, poison, or embed their own watermarks since they are incentivized to maximize global accuracy.  FedIPR \cite{li2022fedipr} and  \cite{liu2021secure,yang2022watermarking}  advocate for the hybrid of  black-box and white-box techniques. They enable all
clients to embed their own watermark in the global model  without sharing secret information. On this basis,  FedTracker \cite{shao2022fedtracker} has been proposed to provide the
trace evidence for illegal model re-distribution by unruly clients.
It uses a combination of server-side global backdoor watermarks
and client-specific local parameter watermarks, where the former is
used to verify the ownership of the global model, and the latter is
used to trace illegally re-distributed models back to unruly clients.
  Nevertheless, existing studies primarily focus on embedding watermarks during training, which could potentially diminish the system's performance. Another crucial challenge is ensuring that private watermarks added by different clients to the same federated DNN model do not conflict with each other. This challenge is unique to the federated learning setting, where different client's watermarks may have the potential to undermine one another.

To address the aforementioned challenges, we introduce a multiparty reversible watermarking scheme, referred to as FedReverse. It exhibits the following distinct features:
\begin{itemize}
 \item \textit{Reversibility}: Unlike conventional methods that embed watermarks during training, FedReverse uniquely incorporates client watermarks into the model's weights post-training. This distinctive approach empowers individual clients to assert exclusive ownership rights over the trained model. The reversible nature of FedReverse allows for the complete removal of watermarks without compromising the model's original weights, ensuring fidelity and flexibility in watermark management.
	\item \textit{Multiparty}: FedReverse enables the trained model to acknowledge the credits of multiple clients. It mitigates potential watermark conflicts among diverse clients by employing an orthogonal key generation technique. This innovative method assigns each client a unique key aligned with a vector in a random matrix. By projecting the cover vector $\mathbf{s}$ onto the direction defined by each key, FedReverse employs a lattice-based reversible data hiding technique within the projected space.  To fully restore the model's weights to their original state, our scheme requires the unanimous consent of all involved parties.
	\item \textit{Security and Reliability}: In terms of safeguarding intellectual property associated with trained DNN models, FedReverse demonstrates robust resistance against Known Original Attacks (KOA), significantly challenging potential attackers in forging watermarks or deducing the secret key. This paper also extensively evaluates FedReverse through comprehensive simulations involving Multi-layer Perceptron (MLP) and Convolutional Neural Networks (CNN) trained on the MNIST dataset, showcasing FedReverse's robustness, reversibility, and its negligible impact on model accuracy across varying embedding parameters and diverse client scenarios. 
\end{itemize}

\noindent Notations: Vectors are represented by lowercase boldface letters. $|\cdot|$, $\langle \cdot, \cdot \rangle$ and $\left\|\cdot \right\| $ respectively denote the element-wise absolute value, the inner product, and the Euclidean norm of the input. The projection operator is defined as $\text{proj}_\mathbf{u}(\mathbf{s}) = \frac{\langle\mathbf{s},\mathbf{u}\rangle}{\|\mathbf{u}\|^2} \cdot \mathbf{u}$.  And the subscript of a parenthesized vector represents the corresponding element of the vector, like $(\mathbf{u}_i)_j$ as $j$-th element of $\mathbf{u}_i$. 
%Finally, uppercase letters represent random variables corresponding to lowercase letters vector, such as $\mathbf{Y}$ representing the value space of $\mathbf{y}$.
% 

\section{Preliminaries}\label{Sec. related work}
    \subsection{Problem Formulation}
    Reversible DNN watermarking embeds watermarks into the weights of neural networks after the model has been trained. Let $\mathcal{W}$ denote the set of all weights in a trained DNN model. During watermark embedding, specific weights from $\mathcal{W}$ are selected based on a location sequence to generate a cover sequence $\mathbf{s}$.
    In the conventional one-party watermarking, aided by the secret key $K$, the embedding, watermark extraction, and weight recovery are given by the following triplet of operations
    \begin{equation}
    	\left\{ \begin{aligned}
    	\mathbf{y} &= \mathrm{Emb}(\mathbf{s}, \mathbf{m},K) \\
    		\hat{\mathbf{m}} &= \mathrm{Ext}(\mathbf{y},K) \\
    		\hat{\mathbf{s}} &=   \mathrm{Rec}(\mathbf{y},K) \\
    	\end{aligned}\right.
    \end{equation}
    where  $\mathrm{Emb}(\cdot)$  embeds the information sequence $\mathbf{m}$ into the cover sequence $\mathbf{s}$ to produce the watermarked sequence $\mathbf{y}$, $\mathrm{Ext}(\cdot)$ and $\mathrm{Rec}(\cdot)$ denote the extraction and recovery functions, respectively.  A reversible watermarking scheme featureing $\mathrm{Rec}(\cdot)$ enables $\hat{\mathbf{s}}=\mathbf{s}$, which differs from non-reversible watermarking.
    
    To enable multiparty watermarking for $n$ clients, the embedding, extraction and recovery triplet is formulated as
        \begin{equation}
    	\left\{ \begin{aligned}
    	\mathbf{y} &= \mathrm{Emb}(\mathbf{s}, \mathbf{m}_1, \ldots, \mathbf{m}_n, K_1, \ldots, K_n) \\
    		\hat{\mathbf{m}}_i &= \mathrm{Ext}(\mathbf{y},K_i), \quad i=1, \ldots, n \\
    		\hat{\mathbf{s}} &=   \mathrm{Rec}(\mathbf{y},K_1, \ldots, K_n)  \\
    	\end{aligned}\right.
    \end{equation}
Here the watermarks $\mathbf{m}_1, \ldots, \mathbf{m}_n$ are embedded simultaneously into the cover sequence $\mathbf{s}$. The extraction is performed individually, so each of the client can claim his/her copyright. In addition, all the clients should cooperate to recover  $\hat{\mathbf{s}}$.
 
\subsection{Reversible Watermarking by Difference Contraction}\label{Difference Contraction}
\textit{Difference expansion} \cite{1227616} is the crux for most reversible data hiding techniques, which is feasible for integer $\mathbf{s}$ with high correlations (e.g., a set of pixels from an image).
On the contrary, a lattice-based reversible data hiding employs the rationale of \textit{difference contraction} \cite{qin2022lattice}. This paradigm is more suitable for cover objects in the form of floating-point numbers.

Difference contraction can be summarized as follows. Let $Q_{\mathbf{m},K}$ be a quantization function identified by the information sequence $\mathbf{m}$ and secret key $K$ which serves the role of dithering \cite{zamir2014lattice}. While the quantization index modulation (QIM) method  employs $\mathbf{y}=Q_{\mathbf{m},K}(\mathbf{s})$ as the embedding function, the difference contraction method in \cite{qin2022lattice} employs 
\begin{align}
	\mathbf{y} &= \mathrm{Emb}_{\mathrm{DC}}(\mathbf{s}, \mathbf{m},K) \nonumber\\ &=Q_{\mathbf{m},K}(\mathbf{s})+ (1-\alpha) (\mathbf{s}-Q_{\mathbf{m},K}(\mathbf{s})). \label{DC-embed}
\end{align}
In the above, the difference vector $\mathbf{s}-Q_{\mathbf{m},K}(\mathbf{s})$ has been contracted by a factor of $1-\alpha$. The term $(1-\alpha) (\mathbf{s}-Q_{\mathbf{m},K}(\mathbf{s}))$ is regarded as a beneficial noise, which helps to achieve the reversibility of $\mathbf{s}$.

In terms of extraction, the receiver searches for the closest coset $
\Lambda_{\mathbf{m}}$ to $\mathbf{y}$ to extract the estimated message $\hat{\mathbf{m}}$ by
\begin{align}
	\hat{\mathbf{m}} &= \mathrm{Ext}_{\mathrm{DC}}(\mathbf{y},K) \\
&= \mathop{\arg\min}_{\mathbf{m}} \text{dist} (\mathbf{y}, \Lambda_\mathbf{m}), \label{DC-extract}
\end{align}
where $\mathrm{dist}(\mathbf{y}, \Lambda_{\mathbf{m}}) \triangleq \min_{\mathbf{v}\in \Lambda_{\mathbf{m}} }\|\mathbf{y}-\mathbf{v}\|$. If $Q_{\mathbf{m},K}(\mathbf{s}) = Q_{\hat{\mathbf{m}},K}(\mathbf{y})$, then the cover vector can be faithfully recovered by 
\begin{align}
	\hat{\mathbf{s}} &=   \mathrm{Rec}_{\mathrm{DC}}(\mathbf{y},K) \\
	&= \frac{\mathbf{y}- \alpha Q_{\hat{\mathbf{m}},K}(\mathbf{y})}{1-\alpha}. \label{DC-recover}
\end{align} 

In addition to the recovery of $\mathbf{s}$, its embedding distortion has also been reduced. Let $r$ be the size of $\mathbf{y}$ and $\mathbf{s}$, the Mean square error (MSE) is defined as 
\begin{equation}
	\mathrm{MSE}= \frac{1}{r} \mathbb{E}\left\| \mathbf{y}-\mathbf{s} \right\|^2.
\end{equation}
By employing $r$-dimensional integer lattices $\Delta\times \mathbb{Z}^r$ to defined the quantization function $Q_{\mathbf{m},K}$, and to set the amount of embedding information as $b$ bits per dimension, the MSE of the difference contraction method is
\begin{equation}
    \label{MSE_DC}
	\mathrm{MSE}_\mathrm{DC} = \frac{ \alpha^2 2^{2b}\Delta^2}{12}.
\end{equation}
 Moreover, the contraction factor $1-\alpha$ should satisfy $1-\alpha \leq 1/2^b$. It is noteworthy that $\Delta$ and $\alpha$ can be identified as public parameters, which controls the trade-off between embedding distortion and the robustness to additive noises.

\begin{figure*}[t]
	\centering
	\includegraphics[width=0.9\textwidth]{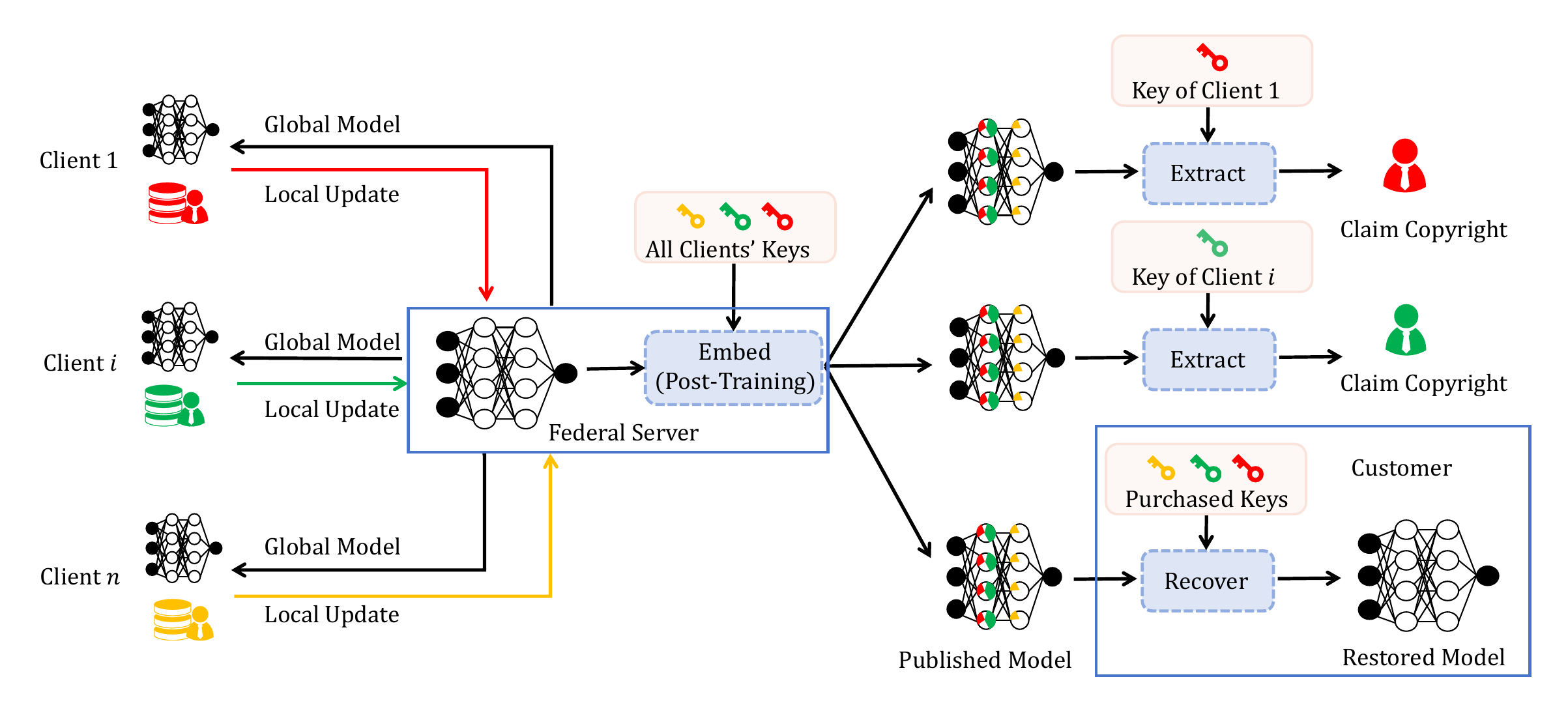}
	\caption{The schematic of multiparty reversible DNN watermarking.}
	\label{fig:proceeding}
\end{figure*}

\begin{figure}[t]
    	\centering
    	\includegraphics[width=0.49\textwidth]{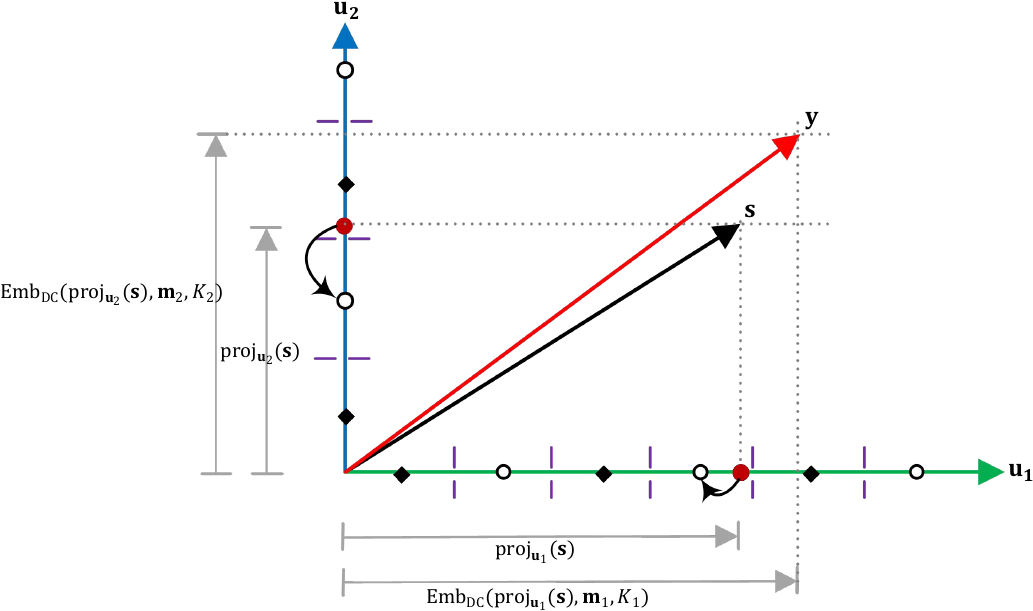}
    	\caption{Demonstration of embedding with two clients.}
    	\label{s_sw_fig1}
\end{figure}
 
\section{The Proposed Method: FedReverse}\label{SEC_PRO}
Expanding upon previously introduced concepts, FedReverse extends the principle of difference contraction to multiparty watermarking schemes. The schematic representation of FedReverse is depicted in Fig. \ref{fig:proceeding}. The primary elements are as follows: (1) During the training phase, FedReverse operates akin to conventional federated learning without integrating watermarks, thereby enhancing the accuracy of the trained model. (2) All clients engage in private key negotiation with the federated server. Subsequent to the training phase, the federated server utilizes these keys to embed reversible watermarks for all clients. (3) Concerning the published watermarked model, any client can assert their individual copyright over the model. (4) Prospective customers can obtain keys from all clients to recover a high-accuracy DNN model free of watermarks.

\subsection{Embedding, Extraction and Recovery}
\label{Multiparty Reversible Watermarking}
 
Each client, equipped with a unique key $K_i= \left\lbrace \mathbf{u}_i, d_i \right\rbrace$, aims to embed a message $\mathbf{m}_i \in \{0, 1\}$. The overall embedding function aggregates these embedded messages into the watermarked vector $\mathbf{y}$:
\begin{align}
	\mathbf{y} &= \mathrm{Emb}_{\mathrm{Fed}}(\mathbf{s}, \mathbf{m}_1, \ldots, \mathbf{m}_n, K_1, \ldots, K_n) \\
	&= \sum_{i=1}^{n} \mathrm{Emb}_{\mathrm{DC}}(\text{proj}_{\mathbf{u}_i}(\mathbf{s}) , \mathbf{m}_i,d_i) \nonumber \\
	&\quad + \mathbf{s} - \sum_{i=1}^{n} \text{proj}_{\mathbf{u}_i}(\mathbf{s}).
	\label{multiple_embed}
\end{align}
It's noteworthy that $\mathbf{s} - \sum_{i=1}^{n} \text{proj}_{\mathbf{u}_i}(\mathbf{s})=0$ only when $n=r$.

Regarding extraction, any client with key $K_i= \left\lbrace \mathbf{u}_i, d_i \right\rbrace$ can independently extract their embedded message from the received signal $\hat{\mathbf{y}}$ as follows:
\begin{align}
	\hat{\mathbf{m}}_i &= \mathrm{Ext}_{\mathrm{Fed}}(\mathbf{y},K_i) \\
	&= \mathrm{Ext}_{\mathrm{DC}}(\text{proj}_{\mathbf{u}_i}(\mathbf{y}),d_i) \nonumber \\
	&= \mathop{\arg\min}_{\mathbf{m}_i} \text{dist} (\text{proj}_{\mathbf{u}_i}(\mathbf{y}), \Lambda_{\mathbf{m}_i}).
	\label{multiple_extract}
\end{align}

Regarding recovery, the original signal $\hat{\mathbf{s}}$ can be fully restored using all client keys when the received signal remains undisturbed:
\begin{align}
	\hat{\mathbf{s}} &= \mathrm{Rec}_{\mathrm{Fed}}(\mathbf{y}, K_1, \ldots, K_n) \\
	&= \sum_{i=1}^{n} \mathrm{Rec}_{\mathrm{DC}}(\mathbf{y}, d_i) + \mathbf{y} - \sum_{i=1}^{n} \text{proj}_{\mathbf{u}_i}(\mathbf{y}). 
\end{align}

Fig. \ref{s_sw_fig1} illustrates the embedding process with two clients, where messages are embedded in their respective directions, and the watermarked projections are eventually merged into a single watermarked vector $\mathbf{y}$.

\subsection{Orthogonal Key Generation}
\label{Orthogonal Key Generation}

In FedReverse, the central server generates a set of orthogonal vectors $\left\lbrace \mathbf{u}_1, \ldots, \mathbf{u}_n \right\rbrace$ and one-dimensional dithers $\left\lbrace d_1, \ldots, d_n \right\rbrace$ as private keys $K_1, \ldots, K_n$. Obviously the dithers can be generated by random number seeds. We focus on the generation of $\left\lbrace \mathbf{u}_1, \ldots, \mathbf{u}_n \right\rbrace$  hereby.

\begin{enumerate}
	\item Each client chooses $n_i (n_i \geq 1)$ random numbers and sends them to the server, ensuring the total dimension $r = \sum n_i$.
	\item Using the received numbers as a seed, the server generates a random bit string and converts it to a matrix of size $r \times r$, as detailed in Algorithm~\ref{Random matrix generation algorithm}.
	\item The server orthogonalizes the matrix rows via Schmidt orthogonalization and sends $n_i$ row vector(s) to the $i$-th client.
	\item Clients receiving multiple vectors ($\mathbf{u}_i$) can generate new partial keys based on the received vectors by applying Algorithm~\ref{New partial key generation algorithm}.
\end{enumerate}

\begin{algorithm}[]
	\KwData{number of bits in a set $B$, vector dimension $r$, number of clients $n$, matrix entry range $q$, random number seed $seed$}
	\KwResult{A random matrix $\mathsf{KeyMat}$}
	$\mathsf{Key}$ $\leftarrow$ Random($seed$); 
	
	$\mathsf{KeyBin}$ $\leftarrow$ Dec2Bin($\mathsf{Key}$); \ // Convert to binary
	
	$\mathsf{KeyMat}[0, \cdots, r]$ $\leftarrow$ []
	
	Padding($\mathsf{KeyBin}$, $B \cdot r \cdot r$);  \ // Fill the length of $\mathsf{KeyBin}$ to $B \cdot r \cdot r$.
	
	\For{$j\leftarrow 0$ \KwTo $r$}{
		col[0, $\cdots$, r] $\leftarrow$ 0; 
		
		count $\leftarrow$ 0;
		
		\For{$i\leftarrow j \times r \times B$ \KwTo $(j+1) \times r \times B$}{
			col[count] $\leftarrow$ Bin2Dec(Kenbin[$i,\cdots,i+B$]) $\times$ ($q/(2^B)$)
			
			count $\leftarrow$ count + 1}
		$\mathsf{KeyMat}[j] \leftarrow $ col}
	
	\Return $\mathsf{KeyMat}$
	\caption{Random matrix generation algorithm}
	\label{Random matrix generation algorithm}
\end{algorithm}

\begin{algorithm}[]
	\KwData{Partial keys of a client $\mathsf{KeyU}$}
	\KwResult{A new partial key of the client $\mathsf{KeyNew}$}
	$\mathsf{KeyNew} \leftarrow \mathbf{0}$ 
	
	\For{$\mathbf{u} \ \mathrm{in} \ \mathsf{KeyU}$}{
		coef $\leftarrow$ Random()
		
		$\mathsf{KeyNew} \leftarrow \mathsf{KeyNew} + \mathrm{coef} \cdot \mathbf{u}$
	}
	
	\Return $\mathsf{KeyNew}$
	\caption{New partial key generation algorithm}
	\label{New partial key generation algorithm}
\end{algorithm}

In Algorithm \ref{Random matrix generation algorithm}, the matrix elements consist of a set of $B$ bits, requiring the initial encrypted binary string's length to be $B \cdot r \cdot r$. Each element comprises $B$ bits from least to most significant, multiplied by $q/2^B$ to determine its integer value. These random matrix elements are filled row-by-row from the least to the most significant entry. For step 3, to ensure all vectors in the final key matrix are linearly independent, it's essential to filter the generated matrix, eliminating any linearly correlated vectors.

Algorithm \ref{New partial key generation algorithm} showcases that if a client receives two or more vectors, they can obtain a new vector via linear combination of mutually perpendicular vectors received, remaining orthogonal to other clients' vectors. This can provide clients with improved convenience to update their embedded keys, expand their key space, and considerably heighten the complexity for attackers seeking to obtain these keys. While the actual matrix size used is significantly larger, a smaller dimensional example is presented below for illustrative purposes.

\textit{Example 1:} Consider $client_1$ and $client_2$ selecting $n_1 = 1$ and $n_2 = 2$ respectively, resulting in a 3 $\times$ 3 key matrix. Assuming the server generates the random number 136,777 with $B=2$ and $q=32768$, the initial encrypted binary string is "100001011001001001". Following Algorithm \ref{Random matrix generation algorithm}, the matrix becomes 

$\mathsf{KeyMat}$ = $\frac{32768}{4} \times$ 
\begin{math}
	\begin{bmatrix}
		2 & 1 & 0 \\
		0 & 2 & 2 \\
		1 & 1 & 1
	\end{bmatrix}
\end{math}. 

Post Schmidt Orthogonalization, $\mathsf{KeyMat}$ transforms to

$\mathsf{KeyU}$ = $\frac{32768}{4} \times$ 
\begin{math}
	\begin{bmatrix}
		2 & -1/5 & -4/21 \\
		0 & 2 & -2/21 \\
		1 & 2/5 & 8/21
	\end{bmatrix}
\end{math}.

The server sends $\mathbf{v}_1 = [2,0,1]$ to $client_1$ as $\mathbf{u}_1$, $\mathbf{v}_2 = [-1/5, 2, 2/5]$ and $\mathbf{v}_3 = [-4/21, -2/21, 8/21]$ to $client_2$. Using Algorithm \ref{New partial key generation algorithm}, $client_2$ generates $\mathbf{u}_2 = [-5, 8, 10]$ with selected $coef_1 = 5, coef_2 = 21$. Fig.~\ref{Key Generation of two clients} showcases the key generation example.

\begin{figure}
	\centering
	\includegraphics[]{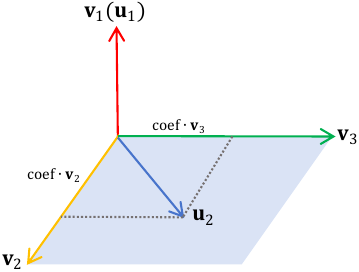}
	\caption{Example of key generation.}
	\label{Key Generation of two clients}
\end{figure}

%\subsection{Watermark Operation}
%Due to the server's curiosity, the watermarks are embedded on the client's side. As each client's watermarks are orthogonal, clients independently embed their watermark into the model embedded by the previous client. The embedding process with keys follows:
%
%\begin{align*}
%	&\mathbf{s}_{w_{1}} = \mathrm{Emb}_{\mathrm{Fed}}(\mathbf{s}, \mathbf{m}_1, K_1) \\
%	&\mathbf{s}_{w_{2}} = \mathrm{Emb}_{\mathrm{Fed}}(\mathbf{s}_{w_{1}}, \mathbf{m}_2, K_2) \\
%	& \dots \\
%	&\mathbf{s}_{w_{n-1}} = \mathrm{Emb}_{\mathrm{Fed}}(\mathbf{s}_{w_{n-2}}, \mathbf{m}_{n-1}, K_{n-1}) \\
%	&\mathbf{y} = \mathrm{Emb}_{\mathrm{Fed}}(\mathbf{s}_{w_{n-1}}, \mathbf{m}_n, K_n)
%\end{align*}
%
%Given that $\mathbf{u}_1, \cdots, \mathbf{u}_n$ are orthogonal, extraction of watermark information is possible without interference among clients. Each client can extract their own watermark at any time:
%
%\begin{align*}
%	\hat{\mathbf{m}}_i = \mathrm{Ext}_{\mathrm{Fed}}(\mathbf{y},K_i)
%\end{align*}
%
%The recovery process is the inverse of embedding, expressed as:
%
%\begin{align*}
%	&\hat{\mathbf{s}}_{w_{n-1}} = \mathrm{Rec}_{\mathrm{Fed}}(\mathbf{y}, K_n) \\
%	&\hat{\mathbf{s}}_{w_{n-2}} = \mathrm{Rec}_{\mathrm{Fed}}(\hat{\mathbf{s}}_{w_{n-1}}, K_{n-1}) \\
%	& \dots \\
%	&\hat{\mathbf{s}} = \mathrm{Rec}_{\mathrm{Fed}}(\hat{\mathbf{s}}_{w_{1}}, K_1)
%\end{align*}
%
%The order of embedding and recovery processes is flexible, but unanimous consent among all clients is necessary for complete recovery.

\section{Theoretical Evaluation}
\label{Performance of Multiparty Reversible Watermarking}
This section delves into a comprehensive theoretical analysis of the performance and robustness of FedReverse. We discuss various metrics used to evaluate the distortion, robustness against interference, perfect covering attributes, and resistance against known original attacks. 

\subsection{Distortion of Embedding}
\label{Analysis of Distortion}
The Mean Square Error (MSE) for a single client is denoted as:
\begin{equation}
	\mathrm{MSE}_i = \frac{\alpha_i^2 2^{2b}\Delta_i^2}{12},
\end{equation}
where each client embeds $b$ bits of information according to \eqref{MSE_DC}. As the projected directions are mutually orthogonal, the embedding MSE of FedReverse between original signals and watermarked signals is represented by:
\begin{equation}
	\mathrm{MSE}_{\mathrm{Fed}} = \sum_{i=1}^{n} \mathrm{MSE}_i  = \frac{2^{2b}}{12}\sum_{i=1}^{n}{\alpha_i^2}{\Delta_i}^2. \label{proposed_MSE}
\end{equation}
Equation \eqref{proposed_MSE} evidently demonstrates that distortion increases with $\alpha$ and $\Delta$, implying a positive correlation between the number of clients and distortion. Therefore, achieving tolerable distortion necessitates an appropriate choice regarding the number of clients.

Another metric for evaluating distortion is the signal-to-watermark ratio (SWR), defined as:
\begin{equation}
	\mathrm{SWR} = 10 \times \log_{10}\left(\frac{\sigma_\mathbf{s}^2}{\sigma_\mathbf{w}^2}\right),
\end{equation}
where $\sigma_\mathbf{s}^2$ and $\sigma_\mathbf{w}^2$ represent the variances of the host signal and the additive watermark, respectively.

\begin{thm}
	The SWR of the proposed watermarking scheme is given by:
	\begin{equation}
		\mathrm{SWR_{Fed}} = -20 \log_{10}\left(\sum_{i=1}^n \left[\alpha_i\beta_i \frac{(\sum_{j=1}^r(\mathbf{u}_i)_j)^2}{r^2\|\mathbf{u}_i\|^2} \right]\right),
	\end{equation}
	where $\beta_i$ represents the reduction ratio for the difference between signals before and after embedding projection, and $j$ is the index of the $j$-th component of $\mathbf{u}_i$.
\end{thm}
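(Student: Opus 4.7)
The plan is to compute the additive watermark vector $\mathbf{w} := \mathbf{y}-\mathbf{s}$ explicitly from the embedding rule \eqref{multiple_embed}, evaluate its power $\sigma_\mathbf{w}^2$ by exploiting the orthogonality of the keys $\{\mathbf{u}_i\}$ built in Section \ref{Orthogonal Key Generation}, and finally feed the ratio $\sigma_\mathbf{s}^2/\sigma_\mathbf{w}^2$ into the SWR definition.

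First, subtracting $\mathbf{s}$ from \eqref{multiple_embed} causes the non-projected residual $\mathbf{s}-\sum_i \text{proj}_{\mathbf{u}_i}(\mathbf{s})$ to cancel, leaving
\begin{equation*}
\mathbf{w} = \sum_{i=1}^n \bigl[\mathrm{Emb}_{\mathrm{DC}}(\text{proj}_{\mathbf{u}_i}(\mathbf{s}), \mathbf{m}_i, d_i) - \text{proj}_{\mathbf{u}_i}(\mathbf{s})\bigr].
\end{equation*}
The single-client identity $\mathrm{Emb}_{\mathrm{DC}}(\mathbf{s}',\mathbf{m},K) - \mathbf{s}' = -\alpha(\mathbf{s}' - Q_{\mathbf{m},K}(\mathbf{s}'))$, which follows at once from \eqref{DC-embed}, turns each summand into $-\alpha_i(\text{proj}_{\mathbf{u}_i}(\mathbf{s}) - Q_{\mathbf{m}_i,d_i}(\text{proj}_{\mathbf{u}_i}(\mathbf{s})))$, a vector lying entirely in the $\mathbf{u}_i$-direction. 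Mutual orthogonality of the keys then decouples $\|\mathbf{w}\|^2$ into a per-client sum via Pythagoras' theorem.

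Second, I would introduce $\beta_i$ as the reduction ratio relating the quantization residual $\text{proj}_{\mathbf{u}_i}(\mathbf{s}) - Q_{\mathbf{m}_i,d_i}(\text{proj}_{\mathbf{u}_i}(\mathbf{s}))$ to the pre-embedding projection $\text{proj}_{\mathbf{u}_i}(\mathbf{s})$, and then expand the projection using $\text{proj}_{\mathbf{u}_i}(\mathbf{s}) = \langle \mathbf{s},\mathbf{u}_i\rangle\mathbf{u}_i/\|\mathbf{u}_i\|^2$, which exposes the factor $\langle \mathbf{s},\mathbf{u}_i\rangle^2/\|\mathbf{u}_i\|^2$. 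Writing $\langle \mathbf{s},\mathbf{u}_i\rangle = \sum_j s_j(\mathbf{u}_i)_j$ and adopting the convention that the cover is summarised by its component-wise mean power $\sigma_\mathbf{s}^2 = \frac{1}{r}\sum_j s_j^2$, the signal-dependent factor collapses to $\sigma_\mathbf{s}\cdot\sum_j(\mathbf{u}_i)_j$ after normalization by $r$. Assembling the pieces gives an identity of the form $\sigma_\mathbf{w}/\sigma_\mathbf{s} = \sum_i \alpha_i\beta_i (\sum_j(\mathbf{u}_i)_j)^2/(r^2\|\mathbf{u}_i\|^2)$, and substituting into $\mathrm{SWR} = -20\log_{10}(\sigma_\mathbf{w}/\sigma_\mathbf{s})$ finishes the argument.

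The hardest part will be the exponent bookkeeping. A naive power-level calculation produces $\alpha_i^2\beta_i^2$ inside the sum and a $-10\log_{10}$ outside, whereas the theorem states $\alpha_i\beta_i$ inside and $-20\log_{10}$ outside; matching the two therefore requires interpreting the inner expression as an amplitude-level quantity and defining $\beta_i$ consistently with that choice. One also has to justify the mean-field reduction $\langle \mathbf{s},\mathbf{u}_i\rangle \approx \bar{s}\sum_j(\mathbf{u}_i)_j$, which is natural if the cover components are of approximately equal magnitude but deserves to be stated explicitly. Once these conventions are pinned down, the remainder --- substituting \eqref{DC-embed}, invoking orthogonality to decouple the per-client contributions, and collecting the factors $\alpha_i$, $\beta_i$, $\sum_j(\mathbf{u}_i)_j$, $\|\mathbf{u}_i\|^2$ and $r$ --- is mechanical.
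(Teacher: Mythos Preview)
Your overall plan --- compute $\mathbf{w}=\mathbf{y}-\mathbf{s}$ from \eqref{multiple_embed}, absorb the quantization residual into a ratio $\beta_i$, expand $\mathrm{proj}_{\mathbf{u}_i}(\mathbf{s})$, and collect the factors $\alpha_i,\beta_i,\sum_j(\mathbf{u}_i)_j,\|\mathbf{u}_i\|,r$ --- is exactly the skeleton of the paper's proof, and your discussion of the mean-field step $\langle\mathbf{s},\mathbf{u}_i\rangle\approx\bar s\sum_j(\mathbf{u}_i)_j$ is more candid than the paper, which passes over that reduction silently.

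The one substantive divergence is your Pythagoras step. You decouple $\|\mathbf{w}\|^2$ into a per-client sum via orthogonality; taken literally this yields $\sigma_\mathbf{w}^2=\sum_i(\cdot_i)^2$, hence $\sigma_\mathbf{w}/\sigma_\mathbf{s}=\sqrt{\sum_i c_i^2}$, not the $\sum_i c_i$ appearing in the theorem. The paper never invokes orthogonality for the variance calculation. Instead it works component-wise: it writes $y_j-s_j=\sum_i\alpha_i\beta_i[\mathrm{proj}_{\mathbf{u}_i}(\mathbf{s})]_j$ (their \eqref{component_y}), expands the projection so that each $w_j$ is a linear combination of the $s_k$'s (their \eqref{linearly_related}), and then reads off $\sigma_\mathbf{w}^2=\bigl(\sum_i c_i\bigr)^2\sigma_\mathbf{s}^2$ directly as an amplitude-level relation. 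That is how the sum over $i$ ends up \emph{inside} the square, giving $-20\log_{10}(\sum_i c_i)$ rather than $-10\log_{10}(\sum_i c_i^2)$. You already sense this tension in your ``hardest part'' paragraph; the resolution is simply to drop the Pythagorean decoupling and follow the component-wise linearisation, after which the amplitude interpretation you propose becomes the natural one. Your route via Pythagoras is arguably the more principled power computation, but it does not reproduce the stated formula.
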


\begin{proof}
	Recall that 	$\mathbf{y}  = \mathrm{Emb}_{\mathrm{Fed}}(\mathbf{s}, \mathbf{m}_1, \ldots, \mathbf{m}_n, K_1, \ldots, K_n)$. Let  $y_j$ be the components of $\mathbf{y}$ and $s_j$ be the components of $\mathbf{s}$ separately. Due to Eqs. \eqref{DC-embed} and \eqref{multiple_embed}, we have
	\begin{align}
		y_j &= s_j + \sum_{i=1}^n\Big[ \alpha_i \big[Q_{\mathbf{m},K} \big(\mathrm{proj}_{\mathbf{u}_i}(\mathbf{s}) \big)\big]_j - \alpha_i \big[\mathrm{proj}_{\mathbf{u}_i}(\mathbf{s})\big]_j \Big] \nonumber \\
		&= s_j + \sum_{i=1}^n \Big[ \alpha_i \beta_i \big[\mathrm{proj}_{\mathbf{u}_i}(\mathbf{s})\big]_j \Big],
		\label{component_y}
	\end{align}
	in which $\beta_i \in [0, \Delta_i/2)$ is reduction ratio for the difference between before and after embedding projecting signals.
	Since
	\begin{align}
		\mathrm{proj}_{\mathbf{u}_i}(\mathbf{s}) = \frac{s_1 u_1+\cdots+s_j u_j+\cdots+s_r u_r}{\|\mathbf{u}_i\|^2},
	\end{align}
	\eqref{component_y} can be written as 
	\begin{align}
		\label{linearly_related}
		y_j &= s_j + \sum_{i=1}^n \Big[\alpha_i\beta_i \frac{(\mathbf{u}_i)_j^2}{\|\mathbf{u}_i\|^2} \Big]\cdot s_j \nonumber \\
		&+ \sum_{i=1}^n \Big[\alpha_i\beta_i \frac{s_1(\mathbf{u}_i)_1 + \cdots + s_r(\mathbf{u}_i)_r }{\|\mathbf{u}_i\|^2}(\mathbf{u}_i)_j \Big].
	\end{align}
Let the variance of the host signal $\mathbf{s}$ be
	\begin{align}
		\label{variance_s}
		\sigma_\mathbf{s}^2  
		&= \frac{1}{r}\sum_{j=1}^r\mathrm{Var}(s_j).
	\end{align}
With reference to \eqref{linearly_related}, the variance of the additive watermark is
\begin{align}
	\sigma_\mathbf{w}^2  
	&= \Big(\sum_{i=1}^n \Big[\alpha_i\beta_i \frac{(\sum_{j=1}^r(\mathbf{u}_i)_j)^2}{r^2\|\mathbf{u}_i\|^2} \Big] \Big)^2 	\sigma_\mathbf{s}^2.
\end{align}
Therefore the SWR of FedReverse can be calculated as
\begin{equation}
	% \mathrm{SWR_{Fed}} = -20 \log{\Big(\sum_{i=1}^n \Big[\alpha_i\beta_i \frac{(\mathbf{u}_i)_j^2}{\|\mathbf{u}_i\|^2} \Big]\Big)}.
	\mathrm{SWR_{Fed}} = -20 \log_{10}{\Big(\sum_{i=1}^n \Big[\alpha_i\beta_i \frac{(\sum_{j=1}^r(\mathbf{u}_i)_j)^2}{r^2\|\mathbf{u}_i\|^2} \Big]\Big)}.
\end{equation}
\end{proof}

\subsection{Robustness Against Interference}
Watermarking should exhibit resilience against watermark removal attacks aimed at creating surrogate models capable of bypassing provenance verification. Numerous watermarking schemes asserting robustness have been introduced, including adversarial training \cite{DBLP:conf/iclr/MadryMSTV18}, feature permutation \cite{DBLP:conf/sp/LukasJLK22}, fine-pruning \cite{DBLP:conf/raid/0017DG18}, fine-tuning \cite{uchida2017embedding}, weight pruning \cite{DBLP:conf/iclr/ZhuG18}, regularization \cite{DBLP:conf/ih/ShafieinejadLWL21}, etc. 

We consider the additive interference model as $\tilde{\mathbf{y}}= \mathbf{y} + \mathbf{n}$, with $\mathbf{n}$ being the interference. The additive noise model serves as a conceptual framework to encompass these diverse attacks, portraying them as perturbations $\mathbf{n}$ added to the watermarked signal $\mathbf{y}$.
The consideration of the additive noise model helps abstract various attacks that attempt to undermine watermarking by introducing interference or perturbations to the watermarked signal. 
Hereby we show that FedReverse enjoys the robustness of watermarked messages. 
%Attacks such as adversarial training, feature permutation, fine-pruning, fine-tuning, weight pruning, regularization, among others, essentially introduce perturbations or modifications to the model, which can be represented as an additive interference $\mathbf{n}$ in the watermarked signal $\tilde{\mathbf{y}}$.  

\begin{thm}
	The watermarked messages can be faithfully recovered if the additive attacks $\mathbf{n}$ satisfies
	\begin{align}
		|\mathrm{proj}_{\mathbf{u}_i}(\mathbf{n})| \in \Big[0, \frac{(2\alpha_i - 1)\Delta_i}{4}\Big].
	\end{align}
\end{thm}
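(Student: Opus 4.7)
The plan is to leverage the orthogonality of the key vectors $\{\mathbf{u}_1,\ldots,\mathbf{u}_n\}$ to decouple the multiparty problem into $n$ independent one-dimensional difference-contraction problems, and then invoke the standard decision-region analysis for the dithered quantizer $Q_{\mathbf{m}_i,d_i}$. Concretely, I would first establish the identity
\begin{equation}
\mathrm{proj}_{\mathbf{u}_i}(\mathbf{y})=\mathrm{Emb}_{\mathrm{DC}}(\mathrm{proj}_{\mathbf{u}_i}(\mathbf{s}),\mathbf{m}_i,d_i).
\end{equation}
This follows from applying $\mathrm{proj}_{\mathbf{u}_i}$ to \eqref{multiple_embed}: by orthogonality, $\mathrm{proj}_{\mathbf{u}_i}(\mathrm{Emb}_{\mathrm{DC}}(\mathrm{proj}_{\mathbf{u}_j}(\mathbf{s}),\mathbf{m}_j,d_j))=\mathbf{0}$ for every $j\neq i$ (the $j$-th embedded term lies along $\mathbf{u}_j\perp\mathbf{u}_i$), and the residual $\mathbf{s}-\sum_j\mathrm{proj}_{\mathbf{u}_j}(\mathbf{s})$ also vanishes under $\mathrm{proj}_{\mathbf{u}_i}$.

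Next, I would introduce the attack: $\tilde{\mathbf{y}}=\mathbf{y}+\mathbf{n}$. Since projection is linear,
\begin{equation}
\mathrm{proj}_{\mathbf{u}_i}(\tilde{\mathbf{y}})=\mathrm{Emb}_{\mathrm{DC}}(\mathrm{proj}_{\mathbf{u}_i}(\mathbf{s}),\mathbf{m}_i,d_i)+\mathrm{proj}_{\mathbf{u}_i}(\mathbf{n}).
\end{equation}
Plugging into \eqref{multiple_extract}, faithful extraction $\hat{\mathbf{m}}_i=\mathbf{m}_i$ amounts to showing that this perturbed point lies strictly closer to the coset $\Lambda_{\mathbf{m}_i}$ than to any competing coset $\Lambda_{\mathbf{m}_i'}$, i.e.\ that it remains in the correct Voronoi decision region of the dithered quantizer along the one-dimensional line $\mathrm{span}(\mathbf{u}_i)$.

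The heart of the argument is a clean scalar geometry calculation in the projected line. By the definition \eqref{DC-embed} of $\mathrm{Emb}_{\mathrm{DC}}$, the watermarked point is offset from the coset representative $Q_{\mathbf{m}_i,d_i}(\mathrm{proj}_{\mathbf{u}_i}(\mathbf{s}))$ by the contracted residual of magnitude at most $(1-\alpha_i)$ times the quantization radius; combined with the decoder's decision boundary to the nearest competing coset, the slack available for the attack is exactly $(2\alpha_i-1)\Delta_i/4$. I would write this as a triangle-inequality chain: $\mathrm{dist}(\mathrm{proj}_{\mathbf{u}_i}(\tilde{\mathbf{y}}),\Lambda_{\mathbf{m}_i})\le(1-\alpha_i)\cdot(\text{quant.\ radius})+|\mathrm{proj}_{\mathbf{u}_i}(\mathbf{n})|$, bound this by half the distance to the next coset, and solve for $|\mathrm{proj}_{\mathbf{u}_i}(\mathbf{n})|$ to obtain the stated threshold.

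The main obstacle I anticipate is bookkeeping rather than conceptual: one must be careful about what $\Delta_i$ parameterizes (fine lattice spacing versus coset spacing), how the one-bit dithered quantizer partitions the line, and the condition $1-\alpha_i\le 1/2^b$ guaranteeing that the contracted residual already fits strictly inside the correct cell, so that the remaining slack to the decision boundary is precisely what the theorem claims. The orthogonality step and the linearity of projection are essentially free; the only place the calculation is easy to get wrong by a factor of two is in translating the quantizer's cell geometry into the explicit bound $(2\alpha_i-1)\Delta_i/4$, so I would verify it with the limiting case $\alpha_i=1/2^b$ (zero tolerance) as a sanity check before concluding.
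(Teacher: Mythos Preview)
Your proposal is correct and follows essentially the same route as the paper: project onto $\mathbf{u}_i$ to decouple the clients by orthogonality, then carry out a one-dimensional decision-region computation showing that the contracted residual has magnitude at most $(1-\alpha_i)\Delta_i/2$ while the boundary to the competing coset sits at $\Delta_i/4$, leaving slack $(2\alpha_i-1)\Delta_i/4$ for the projected noise. Your treatment of the decoupling step (explicitly killing the $j\neq i$ terms and the residual under $\mathrm{proj}_{\mathbf{u}_i}$) is in fact slightly more careful than the paper's, which abbreviates this as $\mathrm{proj}_{\mathbf{u}_i}(\mathbf{y})=c\cdot\mathrm{proj}_{\mathbf{u}_i}(\mathbf{s})$ and reads off the bound directly from the magnitudes rather than via a triangle-inequality chain.
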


\begin{proof}
	
	According to \eqref{multiple_extract}, the received interfered signal $\tilde{\mathbf{y}}$ is projected onto the $i$-th client's vector $\mathbf{u}_i$ before extracting, i.e.,
	\begin{equation}
		\mathrm{proj}_{\mathbf{u}_i}(\tilde{\mathbf{y}}) = \mathrm{proj}_{\mathbf{u}_i}(\mathbf{y}) + \mathrm{proj}_{\mathbf{u}_i}(\mathbf{n}).
		%  		\mathbf{y}_{wi} = \mathbf{y}_{i} + \mathbf{n}_{i}.
	\end{equation}
	Due to the periodicity and symmetry of embedding function, $|\mathrm{proj}_{\mathbf{u}_i}(\mathbf{s})| \in [0, \Delta_i/2]$. The vector form of \eqref{linearly_related} can be abbreviated as
	\begin{align}
		\mathrm{proj}_{\mathbf{u}_i}(\mathbf{y}) = c\cdot \mathrm{proj}_{\mathbf{u}_i}(\mathbf{s}).
	\end{align}
	% where $c$ is a coefficient. 
	Further, we can obtain $|\mathrm{proj}_{\mathbf{u}_i}(\mathbf{y})| \in [0, (1-\alpha_i)\Delta_i/2]$.
	
	As mentioned in Section~\ref{Difference Contraction}, extracting watermark is to find the closest coset $\mathbf{\Lambda}_{\mathbf{m}_i}$ to $\mathrm{proj}_{\mathbf{u}_i}(\tilde{\mathbf{y}})$. Specially, correct extracting requires that $\mathrm{proj}_{\mathbf{u}_i}(\tilde{\mathbf{y}})$ is in $\mathbf{\Lambda}_{\mathbf{m}_i}$ where $\mathrm{proj}_{\mathbf{u}_i}(\mathbf{y})$ stays, namely $|\mathrm{proj}_{\mathbf{u}_i}(\tilde{\mathbf{y}})| \in [0, \Delta_i/4]$. A intuitive explanation is shown in Fig.\ref{one_party}, and the projecting interference $|\mathrm{proj}_{\mathbf{u}_i}(\mathbf{n})| = |\mathrm{proj}_{\mathbf{u}_i}(\tilde{\mathbf{y}})| - |\mathrm{proj}_{\mathbf{u}_i}(\mathbf{y})| \in [0, (2\alpha_i-1)\Delta_i/4]$. Correspondingly, there is no demand to consider the interference in the direction without any client.
\end{proof}

\begin{figure}[t]
	\centering
	\includegraphics[width=0.5\textwidth]{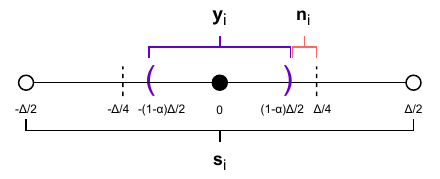}
	\caption{The relationship among the projection $\mathbf{s}_i, \mathbf{y}_i$ and $\mathbf{n}_i$. The correct extracting range is $(-\Delta/4, \Delta/4)$.}
	\label{one_party}
\end{figure}

\subsection{Perfect Covering}
Confidentiality stands as a foundational principle in ensuring security and privacy within machine learning, as emphasized by Papernot et al. \cite{DBLP:conf/eurosp/PapernotMSW18}. The concept of \textit{perfect secrecy}, originally pioneered by Shannon \cite{shannon1949communication}, has been instrumental in defining the notion of  perfect covering in watermarking \cite{cayre2005watermarking}.  Mathematically, this condition is expressed as:
\begin{align}
	p_\mathbf{W}(\mathbf{w}) = p_\mathbf{W}(\mathbf{w}|\mathbf{y}),\ \text{for any} \ (\mathbf{y}, \mathbf{w}),
\end{align}
where $p_\mathbf{W}(\cdot)$ represents the probability density function of the watermark. This condition ensures that the presence of the watermark remains completely concealed within the watermarked content, even when observing the content itself, thereby preserving the secrecy and integrity of the embedded information.

%\begin{figure}[t]
%	\centering
%	\includegraphics[width=0.5\textwidth]{}
%	\caption{The watermark message $\mathbf{m} \in \{0, 1\}$, while the messages represented respectively by $\Lambda_0, \Lambda_1$ are uncertain.}
%	\label{example}
%\end{figure}

\begin{thm}
	The proposed FedReverse provides perfect covering.
\end{thm}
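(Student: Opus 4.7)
The plan is to interpret perfect covering as the requirement that the watermarked vector $\mathbf{y}$ carry no information about the hidden watermark $\mathbf{w}$, and to establish this by proving the equivalent identity $p(\mathbf{y}\mid\mathbf{w}) = p(\mathbf{y})$. The main analytical tool I will invoke is the crypto lemma for dithered lattice quantization: when the dither is uniformly distributed over the Voronoi cell of the base lattice, the quantization error is itself uniform over that cell and statistically independent of the quantizer input, and in particular absorbs any message-induced coset shift.

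First I would simplify the embedding formula \eqref{multiple_embed}. Using the definition of the difference-contraction embedding \eqref{DC-embed} and the mutual orthogonality of the key vectors $\{\mathbf{u}_i\}$, the residual term $\mathbf{s} - \sum_i \mathrm{proj}_{\mathbf{u}_i}(\mathbf{s})$ combines cleanly with the per-client embedding contributions to give
\begin{equation*}
\mathbf{y} \;=\; \mathbf{s} \;-\; \sum_{i=1}^{n} \alpha_i\,\mathbf{e}_i,
\end{equation*}
where $\mathbf{e}_i = \mathrm{proj}_{\mathbf{u}_i}(\mathbf{s}) - Q_{\mathbf{m}_i,d_i}\bigl(\mathrm{proj}_{\mathbf{u}_i}(\mathbf{s})\bigr)$ is the projected quantization error along $\mathbf{u}_i$. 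Writing $Q_{\mathbf{m}_i,d_i}$ as the translation by $d_i + \lambda(\mathbf{m}_i)$ of the base quantizer on $\Delta_i\mathbb{Z}$, where $\lambda(\mathbf{m}_i)$ is the coset representative selected by the bit $\mathbf{m}_i$, I would observe that since $d_i$ is drawn uniformly on $[-\Delta_i/2,\Delta_i/2]$ via its random seed, the effective shift $d_i + \lambda(\mathbf{m}_i) \bmod \Delta_i$ is again uniform on the Voronoi cell \emph{regardless of} $\mathbf{m}_i$.

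Applying the crypto lemma coordinate-wise in each $\mathbf{u}_i$ direction then yields that every $\mathbf{e}_i$ is uniform on $[-\Delta_i/2,\Delta_i/2]$ and is independent of both $\mathrm{proj}_{\mathbf{u}_i}(\mathbf{s})$ and $\mathbf{m}_i$. The orthogonality of the key vectors, combined with the mutual independence of the dithers $\{d_i\}$, promotes this to joint independence of the whole family $\{\mathbf{e}_i\}_{i=1}^{n}$ and of $\mathbf{s}$. Consequently the conditional law of $\mathbf{y}$ given the watermark depends only on the prior of $\mathbf{s}$ and on the message-free uniform laws of the $\mathbf{e}_i$, so it coincides with the unconditional law of $\mathbf{y}$. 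Bayes's rule then delivers $p_{\mathbf{W}}(\mathbf{w}\mid\mathbf{y}) = p_{\mathbf{W}}(\mathbf{w})$.

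The hard part will be the rigorous handling of the ``absorption'' of the message-dependent coset shift into the dither: this step requires $d_i$ to be genuinely uniform on the full Voronoi cell (not merely pseudo-random), and the reduction must be carried out simultaneously for all $n$ clients while they share the common host $\mathbf{s}$. A secondary subtlety is to verify that orthogonality of $\{\mathbf{u}_i\}$ really upgrades per-direction independence of each $\mathbf{e}_i$ from $\mathbf{s}$ to \emph{joint} independence of the whole family $\{\mathbf{e}_i\}$, so that conditioning on the shared host does not reintroduce correlations across clients. Once these two points are in hand, the marginalization over $\mathbf{s}$ is routine and perfect covering follows.
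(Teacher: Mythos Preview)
Your approach is genuinely different from the paper's. The paper does not invoke the crypto lemma or any distributional property of the dither at all; instead, its argument rests on \emph{secrecy of the labeling and of the scale}: since (i) the assignment of message bits to cosets $\Lambda_{\mathbf{m}_i}$ is chosen privately by each client and (ii) the quantization step $\Delta$ is likewise treated as unknown to the attacker, an observer of $\mathbf{y}_i$ cannot tell which hidden bit a given coset encodes, so $p_\mathbf{Y}(\mathbf{y}\mid\mathbf{w})=p_\mathbf{Y}(\mathbf{y})$; Bayes then finishes exactly as you do. In contrast, you obtain the same identity from the \emph{uniform randomness of the dither}, letting the crypto lemma absorb the message-dependent coset shift so that each $\mathbf{e}_i$ is uniform and independent of both $\mathbf{s}$ and $\mathbf{m}_i$. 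Your route is the standard information-theoretic one and is more rigorous; it also has the pleasant feature of not needing $\Delta$ to be secret, which matters because the paper itself earlier declares $\Delta$ and $\alpha$ to be public parameters. The price you pay is an added modeling assumption---that each $d_i$ is genuinely uniform on the full Voronoi cell $[-\Delta_i/2,\Delta_i/2]$---which the paper never states; you correctly flag this as the delicate step. One small caution: keep your use of ``watermark'' consistent with the paper's, which takes $\mathbf{w}$ to be determined by the messages $\mathbf{m}_i$ (and the projections $\mathbf{s}_i$); your independence argument is really showing $p(\mathbf{y}\mid\{\mathbf{m}_i\})=p(\mathbf{y})$, so make that identification explicit before applying Bayes.
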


 \begin{proof}
 	Considering the projection on one client $\mathbf{u}_i$, due to the principle of QIM method, $\mathbf{s}_i$ is quantized into the corresponding coset $\Lambda_{\mathbf{m}_i}$ based on $\mathbf{m}_i$. In other words, the watermarked signal $\mathbf{y}_i$ in $\Lambda_{\mathbf{m}_i}$ represents the hidden message corresponding to $\Lambda_{\mathbf{m}_i}$. However, the hidden messages corresponding to cosets are selected by clients, which is unknown to attackers. Also, $\Delta$ is related with quantization, which is selected by clients and unaware for attackers as well. Hence leaking information about watermark does not increase the likelihood of accessing to $\mathbf{y}_i$. Further, global watermarked signal $\mathbf{y}$ is secure.
 	% A simple example is shown in Fig.\ref{example}.

 	Assume that the probability density function of $\mathbf{y}$ is $p_\mathbf{Y}(\mathbf{y})$. Based on the above inference, we have
 	\begin{align}
 		p_\mathbf{Y}(\mathbf{y}|\mathbf{w}) = p_\mathbf{Y}(\mathbf{y}),
 	\end{align}
 	where additive watermark $\mathbf{w}$ is calculated by all of $\mathbf{m}_i$ and $\mathbf{s}_i$.
 	Based on Bayes rule $p_\mathbf{Y}(\mathbf{y}|\mathbf{w})p_\mathbf{W}(\mathbf{w}) = p_\mathbf{W}(\mathbf{w}|\mathbf{y})p_\mathbf{Y}(\mathbf{y})$, then $p_\mathbf{W}(\mathbf{w}) = p_\mathbf{W}(\mathbf{w}|\mathbf{y})$.
 \end{proof}

\subsection{Resistance to Known Original Attack}
We define the \textit{Known Original Attack} (KOA) as the security level of the watermarked scheme, following Diffie-Hellman's Terminology \cite{katz2007introduction}. KOA represents a scenario in which an attacker obtains $N_o$ watermarked vectors along with their corresponding original versions, forming pairs $(\mathbf{y}, \mathbf{s})^{N_o}$.

The primary objective for the attacker in this scenario is to extract the watermark by inferring the key rather than restoring the model. The attacker can calculate the difference vector $\mathbf{e} = \mathbf{y} - \mathbf{s}$ based on the obtained information. According to \eqref{multiple_embed}, the attacker tends to assume that a longer vector is more affected by embedding in all directions. By decomposing the longest vector $\mathbf{e}$ into $r$ pairwise vertical vectors as the client vectors $\mathbf{u}$ and adjusting the direction according to the remaining $\mathbf{e}$, the attacker infers the key $K'$. Assuming that the inferred key $K'$ is the same as or close to the actual key $K$ due to the unknown cosets, the conditional entropy is given by the formula:
\begin{align}
    H(K' | (\mathbf{y}, \mathbf{s})^{N_o}) &= - \sum_{K'} P(K' | (\mathbf{y}, \mathbf{s})^{N_o}) \cdot \log  P(K' | (\mathbf{y}, \mathbf{s})^{N_o}) \\
    &= \frac{n}{|\mathcal{M}|!} \log_{|\mathcal{M}|}\big(\frac{1}{|\mathcal{M}|!}\big),
\end{align}
where \( P(K' | (\mathbf{y}, \mathbf{s})^{N_o}) \) represents the conditional probability distribution of the inferred key \( K' \) given the observed pairs \( (\mathbf{y}, \mathbf{s})^{N_o} \), and $\mathcal{M}$ denotes the message space of $\mathbf{m}_1, \ldots, \mathbf{m}_n$. This formula calculates the average uncertainty or information content about the inferred key \( K' \) after observing the set of pairs. The entropy is a measure of the average amount of  uncertainty associated with the inferred key \( K' \) given the observed data.
Nevertheless, it is impossible to correctly decompose $\mathbf{e}$ to obtain $K$ in $\mathbb{R}^r$ though the robustness allows a certain amount of offset. Therefore
\begin{align}
    H(K | (\mathbf{y}, \mathbf{s})^{N_o}) \gg \frac{n}{|\mathcal{M}|!} \log_{|\mathcal{M}|}\big(\frac{1}{|\mathcal{M}|!}\big).
\end{align}

Moreover, based on KOA, we examine the existential unforgeability of our proposed watermarking scheme as follow:

\textbf{Existential Unforgeability under the Known Orignal Attack (EUF-KOA)}: Similarly to Existential Unforgeability under Chosen-Plaintext Attack(EUF-CPA) \cite{katz2007introduction}, the security of which can be considered as the advantage of the attackers forging signatures for plaintext when they obtain the public key, the security of EUF-KOA in our proposed watermarking scheme is the possibility of the attackers forging one certain client for watermark embedding. If the advantage of the attackers can be ignored for any polynomial time, the scheme is considered as EUF-KOA secure.

Thus, the EUF-KOA security in the proposed watermarking scheme can be defined as a game between attacker $\mathcal{A}$ and challenger $\mathcal{C}$ as follow:

\noindent\textbf{Setup}: $\mathcal{A}$ obtains the cover vector $s$.
    
\noindent\textbf{Query Phase}: In this stage, $\mathcal{A}$ can conduct a series of queries to obtain a series of $\mathbf{y}$: when $\mathcal{A}$ submits one $\mathbf{m}_i$, $\mathcal{C}$ applies $\mathrm{Emb}_{\mathrm{DC}}(\mathbf{s}, \mathbf{m}_i, K_i)$ and gives the generated $\mathbf{y}_i$ to $\mathcal{A}$.
    
\noindent\textbf{Output}: $\mathcal{A}$ outputs $(\mathbf{m}_i^{*},\mathbf{y}_i^{*})$.
    
If $\mathrm{Emb}_{\mathrm{DC}}(\mathbf{s}, \mathbf{m}_i^{*},K_i) = \mathbf{y}_i^{*}$ and $\mathcal{A}$ never queries $\mathbf{m}_i^{*}$ in \textit{Query Phase}, $\mathcal{A}$ wins the game. Therefore, we define the attacker's advantage of the EUF-KOA is the probability of attacker $\mathcal{A}$ winning the game.
    
 \begin{thm}
 	The proposed watermarking scheme provides EUF-KOA.
 \end{thm}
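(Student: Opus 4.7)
The plan is to bound the adversary's winning probability by a negligible function of the security parameters. I would start by expanding what $\mathcal{A}$ actually observes: each query reply $\mathbf{y}_i = \mathrm{Emb}_{\mathrm{DC}}(\mathbf{s}, \mathbf{m}_i, K_i)$ can be algebraically inverted using \eqref{DC-embed} to recover exactly one quantization point $Q_{\mathbf{m}_i, K_i}(\text{proj}_{\mathbf{u}_i}(\mathbf{s}))$, together with the component of $\mathbf{s}$ orthogonal to $\mathbf{u}_i$. Thus the Query Phase leaks at most one lattice evaluation per queried message at the fixed cover $\mathbf{s}$, leaving the key structure underdetermined in the multi-dimensional embedding space.

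Next, I would characterize the set of keys $K_i = \{\mathbf{u}_i, d_i\}$ consistent with the observed transcript and show it remains super-polynomially large. The direction $\mathbf{u}_i$ is drawn by Algorithm \ref{Random matrix generation algorithm} from a space with roughly $2^{B r^2}$ configurations, and the dither $d_i$ varies over a fundamental cell of the base lattice. For a fresh message $\mathbf{m}^* \notin \{\mathbf{m}_i\}$, the coset $\Lambda_{\mathbf{m}^*}$ is displaced by a portion of $d_i$ that the transcript never probes, so the correct forgery $\mathbf{y}^*$ still spreads over the entire coset (up to the contraction factor $1-\alpha$). The core reduction is then a simulator argument: from the transcript alone one constructs a distribution on $\mathbf{y}^*$ statistically close to uniform over the admissible region, so any successful $\mathcal{A}$ must effectively guess from a set of super-polynomial size in $B$, $r$, and the dither precision, yielding a negligible advantage.

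The hardest part will be controlling the leakage about $d_i$ when the message alphabet $\mathcal{M}$ is small (for instance $\mathbf{m}_i \in \{0,1\}$) and $\mathcal{A}$ queries every value, since in the scalar lattice setting two queried outputs essentially pin $d_i$ modulo $\Delta$. The natural remedy is to let the high-dimensional projection direction $\mathbf{u}_i$ absorb this leakage: even with $d_i$ fully known, reconstructing $\mathbf{y}^*$ in $\mathbb{R}^r$ still requires knowing $\mathbf{u}_i$, which retains super-polynomial entropy given only a polynomial number of projection observations at the single point $\mathbf{s}$. I would isolate this into a standalone lemma on the hardness of recovering a random orthogonal direction from a few of its projections, which I expect to be the content-bearing core of the argument; the remainder of the proof is then a routine composition of this lemma with the coset-guessing bound.
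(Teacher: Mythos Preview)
Your proposal is considerably more structured than the paper's own argument. The paper's proof is essentially three sentences: it asserts that a successful forgery requires recovering a scalar multiple of $\mathbf{u}_i$, observes that $\mathbf{u}_i$ lives in the continuum $\mathbb{R}^r$ so the probability of guessing its direction is ``approaching $0$'', notes that the dither $d_i$ is also secret, and concludes the advantage is exactly $0$. There is no analysis of what the Query Phase leaks, no simulator, and no explicit security parameter; the paper treats the claim as an observation about continuous key spaces rather than as a reduction.

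Your reduction plan is a legitimate attempt to make this rigorous, but the lemma you intend to rely on does not hold under the game as written. You correctly note that from a single reply $\mathbf{y}_i$ together with the known cover $\mathbf{s}$ one can recover the quantization point $Q_{\mathbf{m}_i,K_i}(\mathrm{proj}_{\mathbf{u}_i}(\mathbf{s}))$; but this point is itself a vector on the line $\mathbb{R}\mathbf{u}_i$, and so is the difference $\mathbf{y}_i-\mathbf{s}$, since by \eqref{multiple_embed} the embedding modifies $\mathbf{s}$ only in the $\mathbf{u}_i$ direction. Hence one query already pins down the direction of $\mathbf{u}_i$ exactly (whenever $\mathbf{y}_i\neq\mathbf{s}$), not merely a scalar inner product at $\mathbf{s}$. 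Your proposed standalone lemma on the ``hardness of recovering a random orthogonal direction from a few of its projections'' therefore does not apply: the adversary is handed a nonzero vector parallel to $\mathbf{u}_i$, and no entropy in the direction survives the first query. With the direction known and $\alpha,\Delta$ public, the scalar attack you yourself flagged in the small-$|\mathcal{M}|$ case goes through. The paper's proof shares this blind spot---it never examines query leakage---so the gap is arguably in the theorem as stated as much as in your strategy, but your plan as written would not close it.
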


 \begin{proof}
 	If $\mathcal{A}$ wants to forge $i$-th client to embed watermark $\mathbf{m}_i^{*}$, $\mathcal{A}$ needs to get $K_i$. However, because of the orthogonality of $\mathbf{u}_i$, only finding correct $\mathbf{u}'_i$ that are in the same direction as $\mathbf{u}_{i}$, $\mathcal{A}$ can forge completely. In other words, $\mathbf{u}'_i = \gamma \mathbf{u}_i$. 
 	
 	Because of the key generation, $\mathbf{u}_i \in \mathbb{R}^r$, the probability to find the satisfied $\mathbf{u}'_i$ is approaching 0. Thus, the probability to find $\mathbf{u}_i$ is approaching 0 as well. Also, the dithers $d_i$  are secret for clients, which usually the attackers cannot obtain.
 	Consequently, without $K_i$, $P(\mathbf{y}_i = \mathbf{y}_{i}^{*})$ = 0, which is to say that the attacker's advantage of the EUF-KOA is 0.
 \end{proof}

\section{Simulations}\label{Sec. experiment}

\subsection{Simulation Settings}

\noindent \textbf{Models and Datasets:} In our experiments, we employ Multi-layer Perceptron (MLP) and Convolutional Neural Networks (CNN) as training models for the image classification task. MLP offers strong expressive and generalization abilities. We construct a 2-layer MLP model with 99,328 weights and CNN models with different layers. Additionally, we utilize the MNIST dataset \cite{lecun1998gradient}, comprising 60,000 training and 10,000 testing grayscale images of handwritten digits. The initial learning rate is set to 0.05, and training lasted for 10 epochs.

\noindent \textbf{Scheme Setups:} We divide the value space into two cosets for each clients, i.e. $\Lambda_{\mathbf{m}_i} \in \{0, 1\}$. For this reason, we can obtain each $\alpha_i \in [0.5, 1)$. The embedding location is chosen as the first layer of each model. Meanwhile, the number of watermarks for each client is modified depending on the number of weights of the first layer and the selected dimension.

\subsection{Test Accuracy}
The comparison between the FedReverse scheme and other watermarking techniques in federated learning, namely FedTracker \cite{shao2022fedtracker} and WAFFLE \cite{tekgul2021waffle}, is illustrated in Figure \ref{compare}. Due to the reversibility of FedReverse, it attains a test accuracy on par with the original model, surpassing the performance of both WAFFLE and FedTracker. For instance, when the number of clients $n=20$, the respective test accuracies for FedTracker, WAFFLE, FedReverse, and the scenario without watermarking are recorded at $0.9833$, $0.9829$, $0.9918$, and $0.9920$.

Within the trained MLP model, the process of embedding watermarks induces a dispersion of weights within a specified range. Following the removal of these embedded watermarks facilitated by the FedReverse approach, we can obtain the restoration of the original, unaltered weights. This critical observation is visually depicted and substantiated in Figure \ref{weights}.

\begin{figure}[t]
	\centering
	\includegraphics[width=0.45\textwidth]{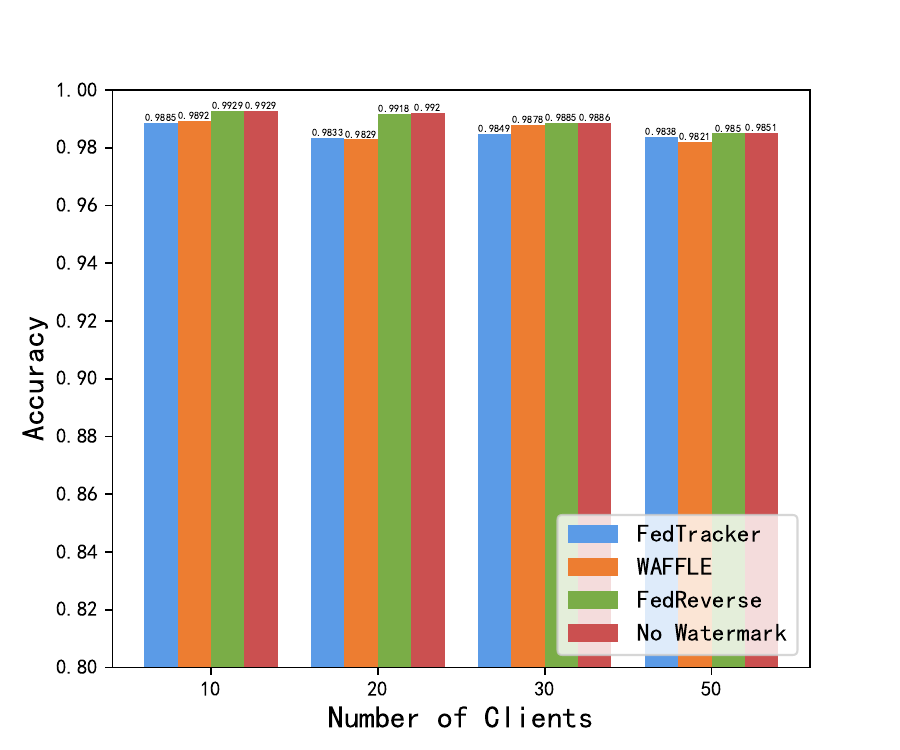}
	\caption{Comparison with other watermarked federated learning schemes.}
	\label{compare}
\end{figure}
\begin{figure}[t]
	\centering
	\includegraphics[width=0.45\textwidth]{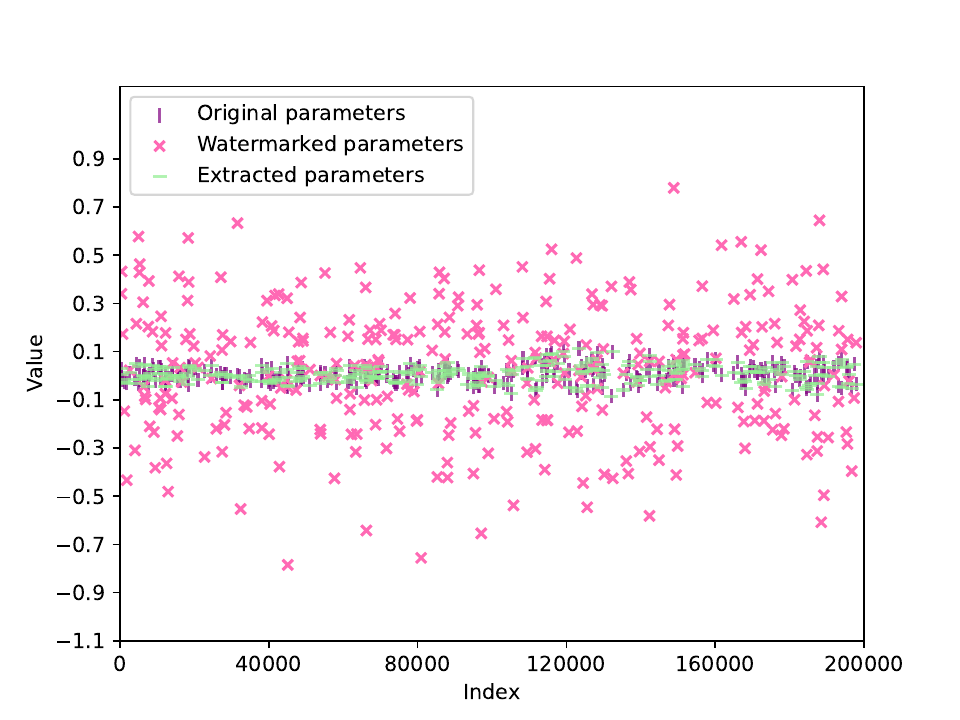}
	\caption{Original, watermarked, and recovered weights.}
	\label{weights}
\end{figure}
 
\subsection{Impact of the Number of Clients}
The influence of the number of clients $n$ plays a pivotal role in the context of our investigation. To scrutinize this aspect, we conducted tests to gauge the impact of embedding multiple watermarks into both MLP and CNN weights, with varying numbers of clients participating in the process. For our experiments, we have uniformly set each client's scaling factor to $\Delta = 0.1$.

Accommodating a larger number of clients necessitates the utilization of higher-dimensional signal vectors, inevitably resulting in a trade-off with model accuracy. The findings outlined in Table \ref{number of users} illustrate the discernible variations in average model accuracy concerning the increasing number of clients. Notably, when $n$ rises from $2$ to $8$ in the CNN model, or from $2$ to $10$ in the MLP model, the test accuracies remain around $0.99$ and $0.94$, respectively. It shows that the impact of increasing the number of clients remains relatively limited. Furthermore, owing to the inherent randomness associated with the embedded message, the accuracy portrays certain degrees of fluctuation.

\begin{table}
\caption{Accuracy of trained models with different $n$ under the same $\Delta$ and $\alpha$.}
\label{number of users}
\scalebox{1.34}{
\begin{tabular}{c|c|c|c|c|c}
\hline
Model                & $n$           & $r$           & $\Delta$                    & $\alpha$   & Accuracy \\ \hline
\multirow{6}{*}{CNN} & \multirow{2}{*}{2}  & \multirow{6}{*}{16} & \multirow{6}{*}{all 0.1} & all 0.9 & 0.99156  \\
                     &                     &                     &                          & all 0.5 & 0.99246  \\ \cline{2-2} \cline{5-6} 
                     & \multirow{2}{*}{4}  &                     &                          & all 0.9 & 0.99168  \\
                     &                     &                     &                          & all 0.5 & 0.99070  \\ \cline{2-2} \cline{5-6} 
                     & \multirow{2}{*}{8}  &                     &                          & all 0.9 & 0.98622  \\
                     &                     &                     &                          & all 0.5 & 0.9908   \\ \hline
\multirow{6}{*}{MLP} & \multirow{2}{*}{2}  & \multirow{6}{*}{10} & \multirow{6}{*}{all 0.1} & all 0.9 & 0.9464   \\
                     &                     &                     &                          & all 0.5 & 0.9469   \\ \cline{2-2} \cline{5-6} 
                     & \multirow{2}{*}{5}  &                     &                          & all 0.9 & 0.9443   \\
                     &                     &                     &                          & all 0.5 & 0.9469   \\ \cline{2-2} \cline{5-6} 
                     & \multirow{2}{*}{10} &                     &                          & all 0.9 & 0.9426   \\
                     &                     &                     &                          & all 0.5 & 0.9463   \\ \hline
\end{tabular}
}
\end{table}

\subsection{Impacts of $\alpha$, $\Delta$, $r$}
For the sake of simplicity, we set the default embedding dimension to $r=16$ for CNN and $r=10$ for MLP, while the number of clients is $n=2$. The original CNN exhibits an accuracy of 0.9931, and the original MLP achieves an accuracy of 0.9466. To establish a reference point for assessing the scheme's impact on model performance, we measure the baseline accuracy of both models after embedding the watermark.

Figures \ref{Accuracy_alpha_a} and \ref{Accuracy_delta_b} illustrate the accuracy of watermarked models concerning varying values of $\alpha$ and $\Delta$. Here, CNN2 denotes a 2-layer CNN, CNN4 a 4-layer CNN, and CNN a 6-layer CNN. It is evident from the observations that embedding messages results in a decrease in model accuracy, with higher values of $\alpha$ and $\Delta$ causing a more pronounced decline in accuracy. Notably, the MLP model displays a higher sensitivity to watermark embedding in contrast to CNN. Loosely inferred, a higher-quality model exhibits lesser sensitivity. Additionally, the impact of dimensions on accuracy appears relatively small. All results are summarized in Table \ref{acc_dimension}.

\begin{figure}[]
	\centering
	\includegraphics[width=0.45\textwidth]{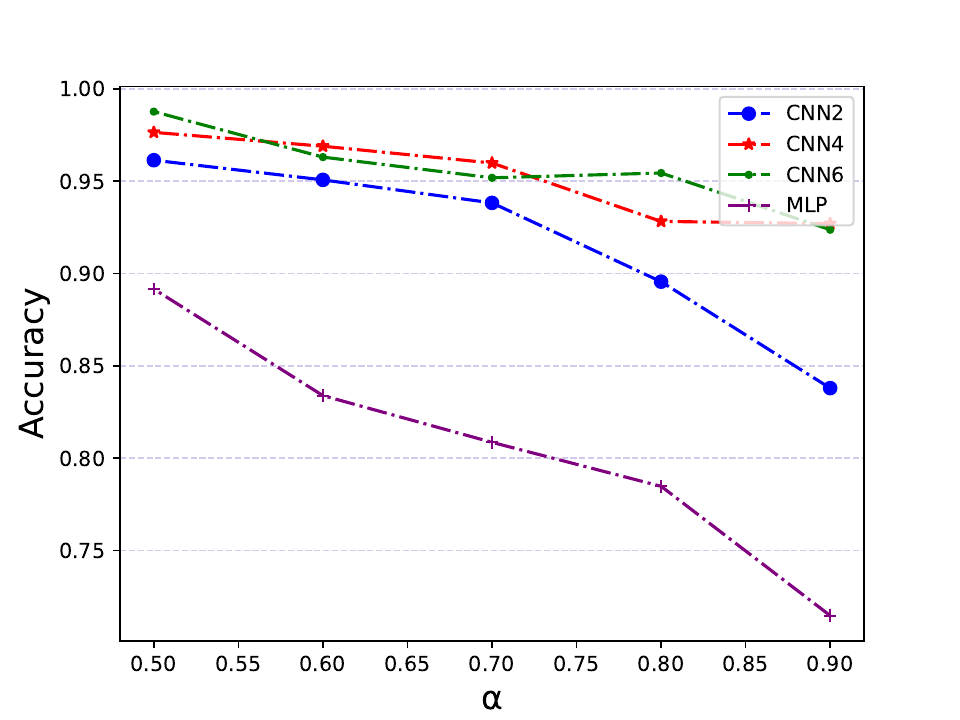}
	\caption{Accuracy of trained models with different $\alpha$, under $\Delta=1$.}
	\label{Accuracy_alpha_a}
\end{figure}

\begin{figure}[]
	\centering
	\includegraphics[width=0.45\textwidth]{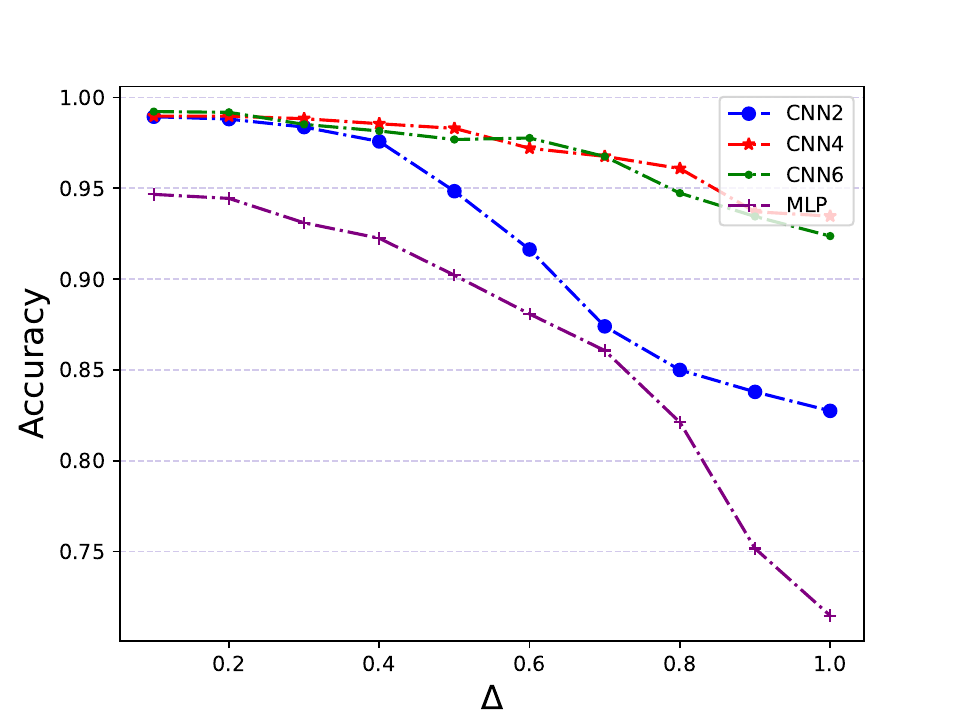}
	\caption{Accuracy of trained models with different $\Delta$ and $\alpha=0.9$.}
	\label{Accuracy_delta_b}
\end{figure}

\begin{table}[]
	\centering
	\caption{Accuracy of trained models with different $r$ under the same $\Delta$ and $\alpha$.}
	\label{acc_dimension}
	\scalebox{1.3}{
		\begin{tabular}{c|c|c|c|c|c}
			\hline
			Model                 & $n$           & $r$           & $\Delta$                & $\alpha$                    & Accuracy \\ \hline
			\multirow{15}{*}{CNN} & \multirow{15}{*}{2} & \multirow{5}{*}{4}  & \multirow{15}{*}{0.1} & 0.9                      & 0.9904   \\
			&                     &                     &                       & 0.8                      & 0.9904   \\
			&                     &                     &                       & 0.7                      & 0.9918   \\
			&                     &                     &                       & 0.6                      & 0.9912   \\
			&                     &                     &                       & 0.5                      & 0.9927   \\ \cline{3-3} \cline{5-6} 
			&                     & \multirow{5}{*}{8}  &                       & 0.9                      & 0.9912   \\
			&                     &                     &                       & 0.8                      & 0.9927   \\
			&                     &                     &                       & 0.7                      & 0.9924   \\
			&                     &                     &                       & 0.6                      & 0.9924   \\
			&                     &                     &                       & 0.5                      & 0.9925   \\ \cline{3-3} \cline{5-6} 
			&                     & \multirow{5}{*}{16} &                       & 0.9                      & 0.9922   \\
			&                     &                     &                       & 0.8                      & 0.9922   \\
			&                     &                     &                       & 0.7                      & 0.9921   \\
			&                     &                     &                       & 0.6                      & 0.9925   \\
			&                     &                     &                       & 0.5                      & 0.9924   \\ \hline
			\multirow{10}{*}{MLP} & \multirow{10}{*}{2} & \multirow{5}{*}{5}  & \multirow{10}{*}{0.1} & \multicolumn{1}{l|}{0.9} & 0.9466   \\
			&                     &                     &                       & \multicolumn{1}{l|}{0.8} & 0.9466   \\
			&                     &                     &                       & \multicolumn{1}{l|}{0.7} & 0.9468   \\
			&                     &                     &                       & \multicolumn{1}{l|}{0.6} & 0.9468   \\
			&                     &                     &                       & \multicolumn{1}{l|}{0.5} & 0.9471   \\ \cline{3-3} \cline{5-6} 
			&                     & \multirow{5}{*}{10} &                       & \multicolumn{1}{l|}{0.9} & 0.9467   \\
			&                     &                     &                       & \multicolumn{1}{l|}{0.8} & 0.9469   \\
			&                     &                     &                       & \multicolumn{1}{l|}{0.7} & 0.9470   \\
			&                     &                     &                       & \multicolumn{1}{l|}{0.6} & 0.9466   \\
			&                     &                     &                       & \multicolumn{1}{l|}{0.5} & 0.9470   \\ \hline
		\end{tabular}
	}
\end{table}

\subsection{Distortion in Watermark Embedding}
The MSE and SWR serve as pivotal metrics for quantifying the efficacy of watermark embedding. In this evaluation, we analyze the MSE and SWR of both MLP and CNN models subsequent to the watermark embedding process, employing distinct values of $\Delta$ and $\alpha$.

For the conducted experiments, we fix the number of clients as $n=1$ for both trained models. Employing an embedding dimension of $r=4$ for CNN and $r=5$ for MLP, Fig.\ref{Alpha_CNN} and Fig.\ref{Alpha_MLP} elucidate the MSE and SWR of watermarked models with $\Delta = 0.1$, showcasing variations in performance concerning different $\alpha$ values. Simultaneously, Fig.\ref{Delta_MSE} and Fig.\ref{Delta_SWR} demonstrate the MSE and SWR of watermarked models with $\alpha = 0.9$ and varying $\Delta$. Notably, the empirical findings reveal a consistent trend: an increase in $\alpha$ and $\Delta$ corresponds to an escalation in MSE while concurrently leading to a decline in SWR. These observed trends align closely with the theoretical analysis expounded in Section~\ref{Analysis of Distortion}.

Comprehensive insights into the impact of $\alpha$ and dimension $r$ on the performance of watermark embedding are tabulated in Table.\ref{Alpha_MSEANDSWR}. Notably, it is discerned that employing a higher dimension augments the efficacy of watermark embedding. Moreover, Table.\ref{num_client_MSEANDSWR} delves into the effect of varying the number of clients ($n$) while maintaining fixed $\alpha$ and $\Delta$. Noteworthy observations manifest that an increase in the number of clients leads to a degradation in the performance of watermark embedding, as evidenced by amplified MSE and diminished SWR metrics.

\begin{figure}[]
	\centering
	\includegraphics[width=.9\linewidth]{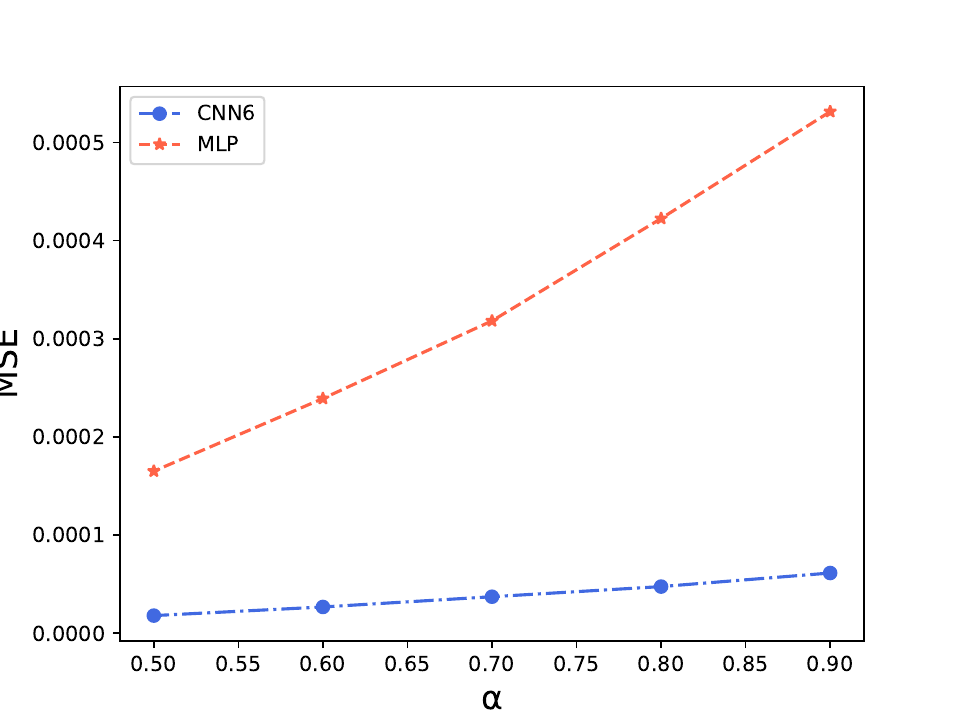}
	\caption{MSE of trained models with different $\alpha$.}
	\label{Alpha_CNN}
\end{figure}

\begin{figure}[]
	\centering
	\includegraphics[width=.9\linewidth]{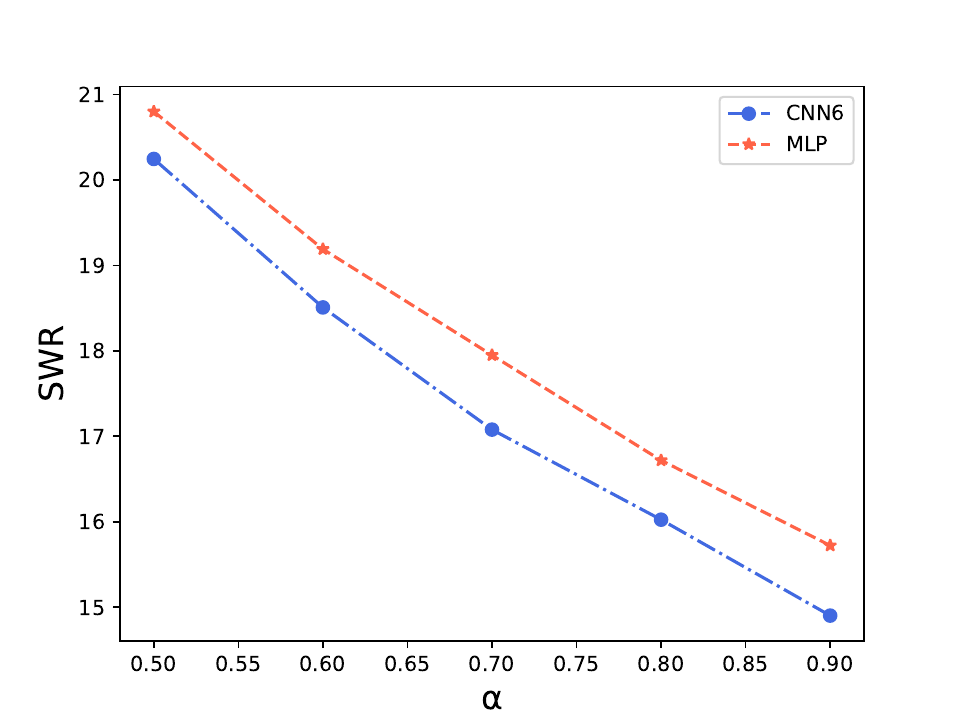}
	\caption{SWR of trained models with different $\alpha$.}
	\label{Alpha_MLP}
\end{figure}

\begin{figure}[]
	\centering
	\includegraphics[width=.85\linewidth]{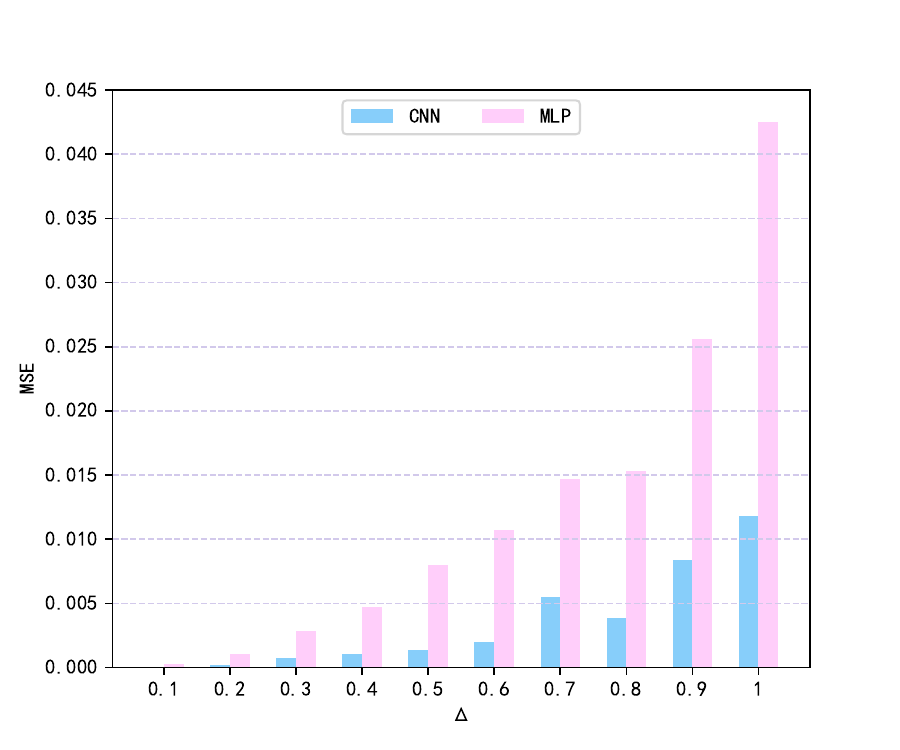}
	\caption{MSE of trained models with different $\Delta$.}
	\label{Delta_MSE}
\end{figure}

\begin{figure}[]
	\centering
	\includegraphics[width=.85\linewidth]{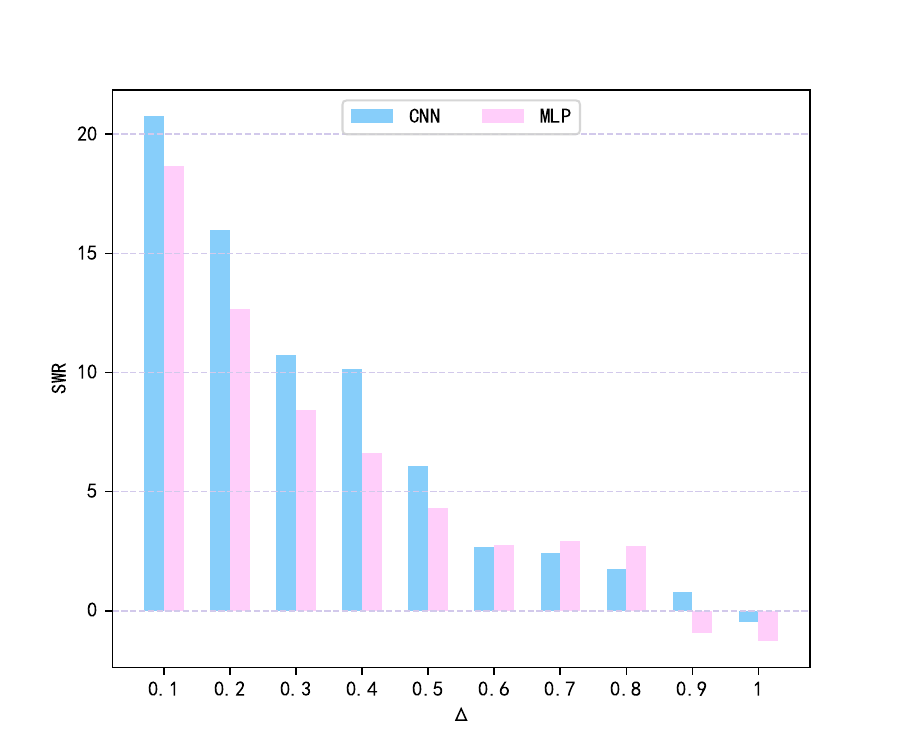}
	\caption{SWR of trained models with different $\Delta$.}
	\label{Delta_SWR}
\end{figure}

\begin{table}[]
    \caption{MSE and SWR of trained models with different $\alpha$ and $r$ with $\Delta$ = 0.1 and $n$ = 1.}
    \label{Alpha_MSEANDSWR}
    \scalebox{1.3}{
        \begin{tabular}{c|c|c|c|c|c}
        \hline
        Model                 & $r$           & $\Delta$                 & $\alpha$ & MSE($10^{-5}$) & SWR \\ \hline
        \multirow{15}{*}{CNN} & \multirow{5}{*}{4}  & \multirow{15}{*}{0.1} & 0.9   &  6.13   &  14.9015   \\
                              &                     &                       & 0.8   & 4.75    &  16.0240   \\
                              &                     &                       & 0.7   &  3.72   &  17.0781   \\
                              &                     &                       & 0.6   &  2.67   &  18.5080   \\
                              &                     &                       & 0.5   &  1.80   &  20.2457   \\ \cline{2-2} \cline{4-6} 
                              & \multirow{5}{*}{8}  &                       & 0.9   &  3.25   &  17.6661   \\
                              &                     &                       & 0.8   & 2.50    &  18.8163   \\
                              &                     &                       & 0.7   &  1.87   &  20.0675   \\
                              &                     &                       & 0.6   & 1.28    & 21.6899    \\
                              &                     &                       & 0.5   &  0.930   &  23.1052   \\ \cline{2-2} \cline{4-6} 
                              & \multirow{5}{*}{16} &                       & 0.9   &  1.60   &  20.7574   \\
                              &                     &                       & 0.8   &  0.14   &   22.2349  \\
                              &                     &                       & 0.7   &  0.942   &  23.0446   \\
                              &                     &                       & 0.6   &  0.648   & 24.6943    \\
                              &                     &                       & 0.5   &  0.413   & 26.6315    \\ \hline
        \multirow{10}{*}{MLP} & \multirow{5}{*}{5}  & \multirow{10}{*}{0.1} & 0.9   & 51.3    & 15.7220    \\
                              &                     &                       & 0.8   &  42.3   & 16.7182    \\
                              &                     &                       & 0.7   &  31.8   & 17.9493    \\
                              &                     &                       & 0.6   &  29.3   & 19.1903    \\
                              &                     &                       & 0.5   &  16.3   & 20.7999    \\ \cline{2-2} \cline{4-6} 
                              & \multirow{5}{*}{10} &                       & 0.9   &  27.1   & 18.6414    \\
                              &                     &                       & 0.8   &  21.1   & 19.7393    \\
                              &                     &                       & 0.7   &  16.1   & 20.9074    \\
                              &                     &                       & 0.6   & 11.7    & 22.2839    \\
                              &                     &                       & 0.5   & 8.18    & 23.8472    \\ \hline
        \end{tabular}
    }
\end{table}

\begin{table}[]
\caption{MSE and SWR of trained models with different $n$ under constant $\alpha$ and $\Delta$.}
\label{num_client_MSEANDSWR}
\scalebox{1.07}{
\begin{tabular}{c|c|c|c|c|c|c}
\hline
Model                & $n$           & $r$          & $\Delta$                    & $\alpha$   & MSE($10^{-5}$)     & SWR     \\ \hline
\multirow{6}{*}{CNN} & \multirow{2}{*}{2}  & \multirow{6}{*}{16} & \multirow{6}{*}{all 0.1} & all 0.9 & 5.11 & 16.6704 \\
                     &                     &                     &                          & all 0.5 & 1.53 & 21.9094 \\ \cline{2-2} \cline{5-7} 
                     & \multirow{2}{*}{4}  &                     &                          & all 0.9 & 9.78 & 13.8584 \\
                     &                     &                     &                          & all 0.5 & 2.92 & 19.1267 \\ \cline{2-2} \cline{5-7} 
                     & \multirow{2}{*}{8}  &                     &                          & all 0.9 & 19.6 & 10.8267 \\
                     &                     &                     &                          & all 0.5 & 6.14 & 15.8744 \\ \hline
\multirow{6}{*}{MLP} & \multirow{2}{*}{2}  & \multirow{6}{*}{10} & \multirow{6}{*}{all 0.1} & all 0.9 & 53.5 & 15.6923 \\
                     &                     &                     &                          & all 0.5 & 16.4 & 20.8117 \\ \cline{2-2} \cline{5-7} 
                     & \multirow{2}{*}{5}  &                     &                          & all 0.9 & 133 & 11.7277 \\
                     &                     &                     &                          & all 0.5 & 41.1 & 16.8370 \\ \cline{2-2} \cline{5-7} 
                     & \multirow{2}{*}{10} &                     &                          & all 0.9 & 267 & 8.7037  \\
                     &                     &                     &                          & all 0.5 & 82.5 & 13.8106 \\ \hline
\end{tabular}
}
\end{table}

\subsection{Weight Distribution Analysis via Histograms}
For a comprehensive understanding of the variations in model weights, we examine the histograms of trained models before embedding, after embedding watermarks, and post-recovery.
Fig.\ref{cnn_Histogram_orignal} and Fig.\ref{mlp_Histogram_orignal} portray the histograms of the original models, which remain consistent with the histograms post-recovery. Furthermore, Table.\ref{Histogram_CNN} and Table.\ref{Histogram_MLP} provide a visual representation of the histograms for the trained models. Notably, the histograms offer several discernible insights: 

\noindent 1) The number of clients and the dimension exhibit negligible impact on model weights, a finding consistent with the earlier conclusion. 
\noindent 2) $\Delta$ and $\alpha$ represent pivotal factors influencing alterations in model weights. As $\Delta$ and $\alpha$ increase, the magnitude of alterations in model weights accentuates, corroborating the aforementioned conclusion.

Hence, for optimal model training tailored to clients' needs, maintaining $\Delta$ and $\alpha$ within rational ranges for each client emerges as a crucial consideration.

\begin{figure}
    \centering
    \includegraphics[width=0.8\linewidth]{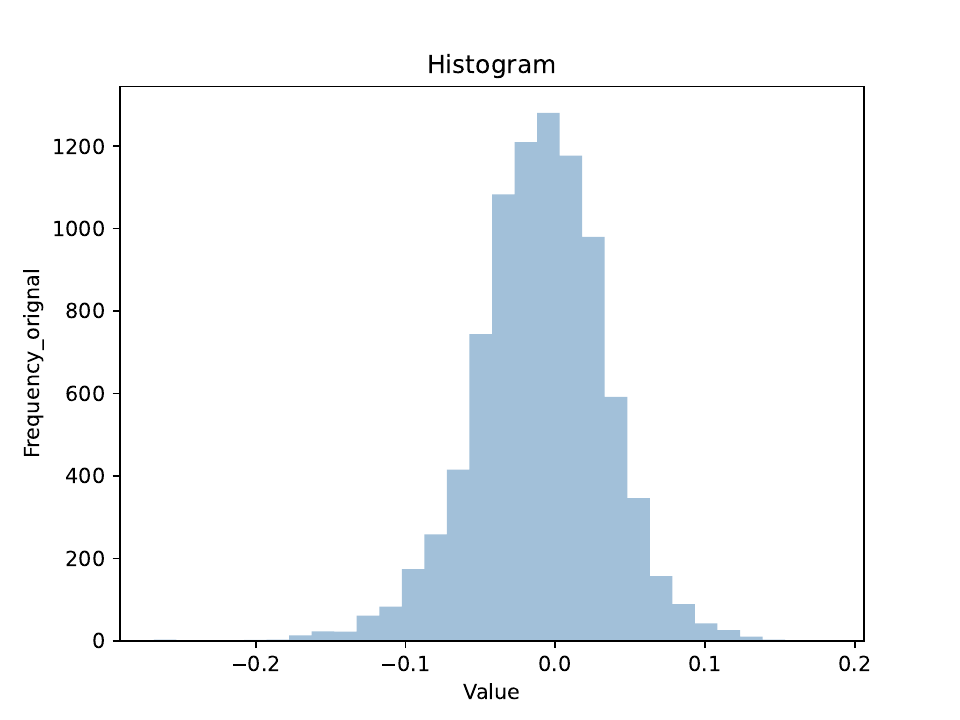}
    \caption{Histogram of the original CNN model.}
    \label{cnn_Histogram_orignal}
\end{figure}

\begin{figure}
    \centering
    \includegraphics[width=0.8\linewidth]{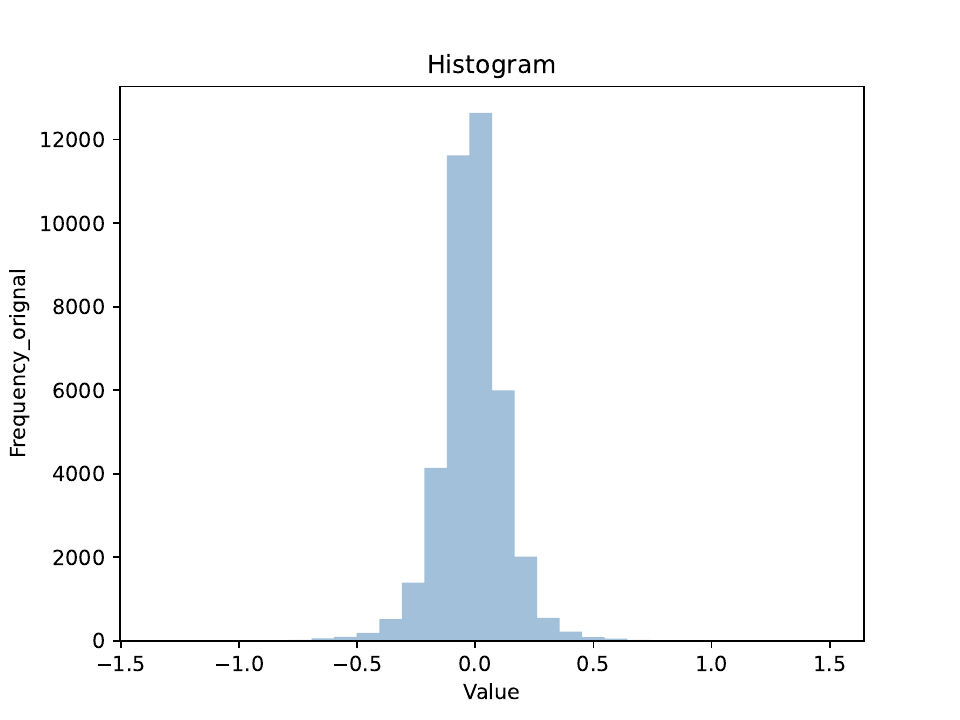}
    \caption{Histogram of the original MLP model.}
    \label{mlp_Histogram_orignal}
\end{figure}

\begin{table*}
\centering
\caption{ Histogram of watermarked CNN in different situation.}
\label{Histogram_CNN}
\begin{tabular}{ccccc}
\hline
\multicolumn{1}{c|}{$n$}               & \multicolumn{4}{c}{1}                                                                                             \\ \hline
\multicolumn{1}{c|}{$r$}               & \multicolumn{4}{c}{4}                                                                                             \\ \hline
\multicolumn{1}{c|}{$\Delta$}           & \multicolumn{2}{c|}{0.1}                                    & \multicolumn{2}{c}{1}                               \\ \hline
\multicolumn{1}{c|}{$\alpha$}           & \multicolumn{1}{c|}{0.9}     & \multicolumn{1}{c|}{0.5}     & \multicolumn{1}{c|}{0.9}     & 0.5                  \\ \hline
\multicolumn{1}{c|}{} & \multicolumn{1}{c|}{\includegraphics[width=.2\linewidth]{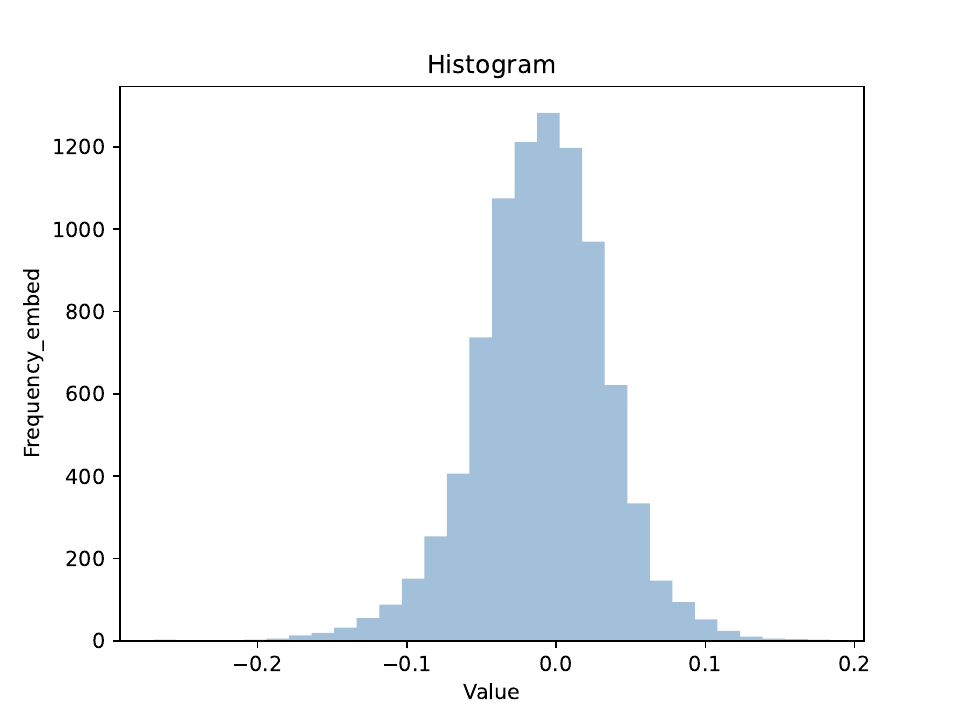} }        & \multicolumn{1}{c|}{\includegraphics[width=.2\linewidth]{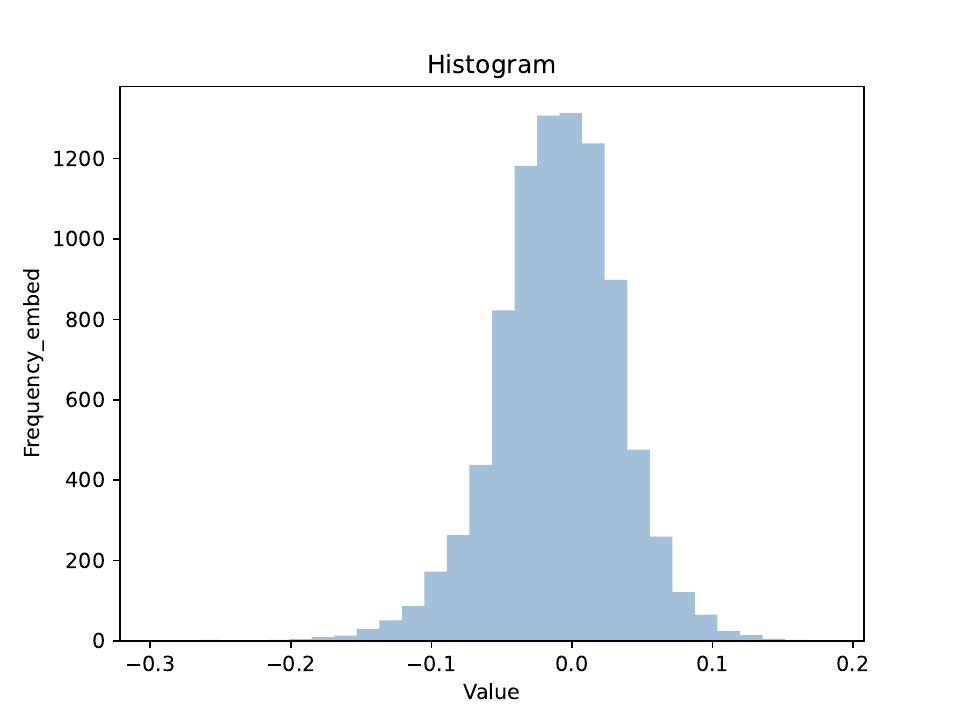} }        & \multicolumn{1}{c|}{\includegraphics[width=.2\linewidth]{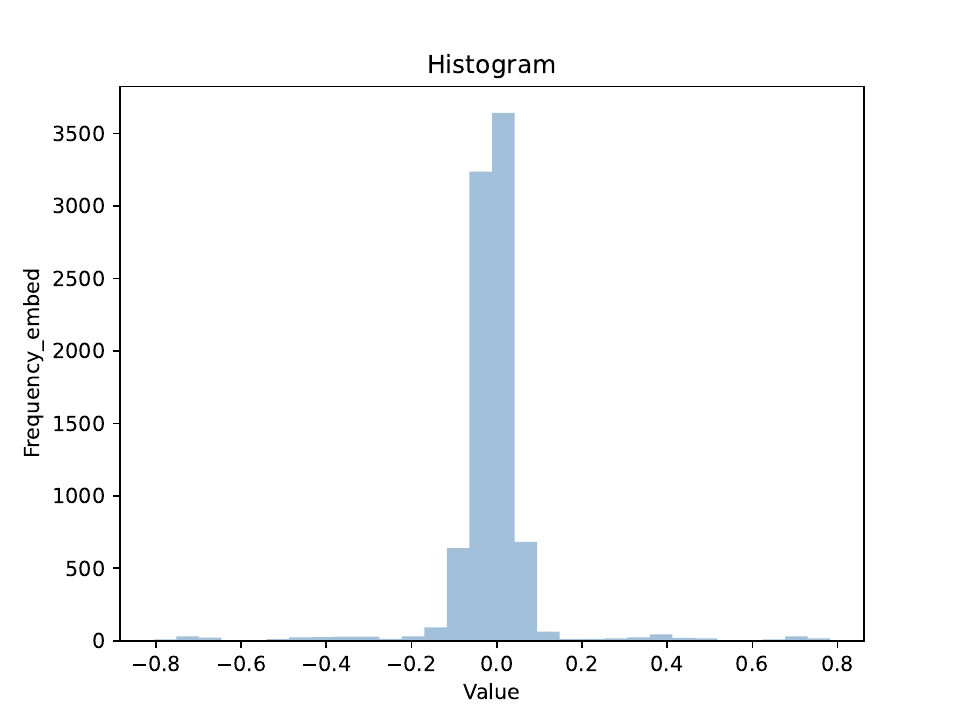} }        &         \includegraphics[width=.2\linewidth]{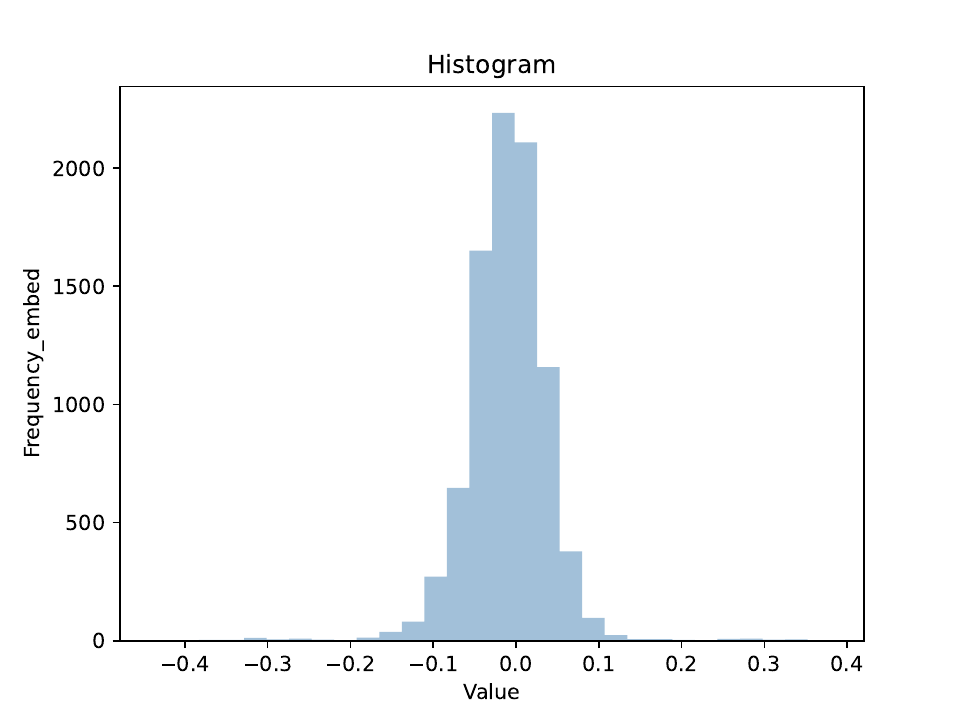}              \\ \hline \hline

\multicolumn{1}{c|}{$n$}               & \multicolumn{4}{c}{4}                                                                                             \\ \hline
\multicolumn{1}{c|}{$r$}               & \multicolumn{4}{c}{4}                                                                                             \\ \hline
\multicolumn{1}{c|}{$\Delta$}           & \multicolumn{2}{c|}{all 0.1}                                & \multicolumn{2}{c}{all 1}                           \\ \hline
\multicolumn{1}{c|}{$\alpha$}           & \multicolumn{1}{c|}{all 0.9} & \multicolumn{1}{c|}{all 0.5} & \multicolumn{1}{c|}{all 0.9} & all 0.5              \\ \hline
\multicolumn{1}{c|}{} & \multicolumn{1}{c|}{\includegraphics[width=.2\linewidth]{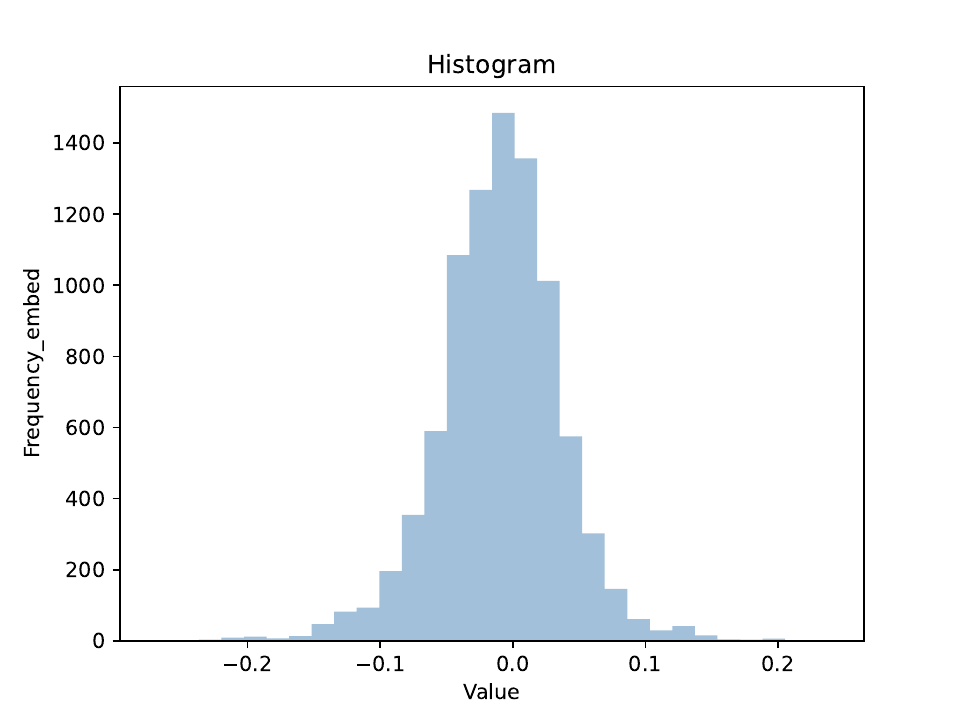} }        & \multicolumn{1}{c|}{\includegraphics[width=.2\linewidth]{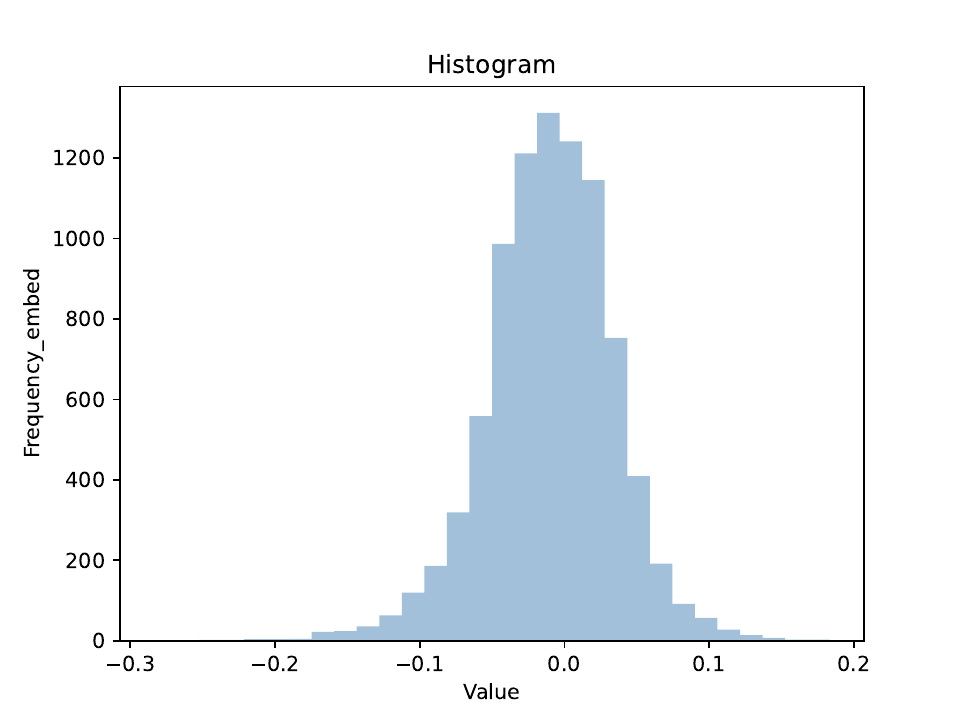} }        & \multicolumn{1}{c|}{\includegraphics[width=.2\linewidth]{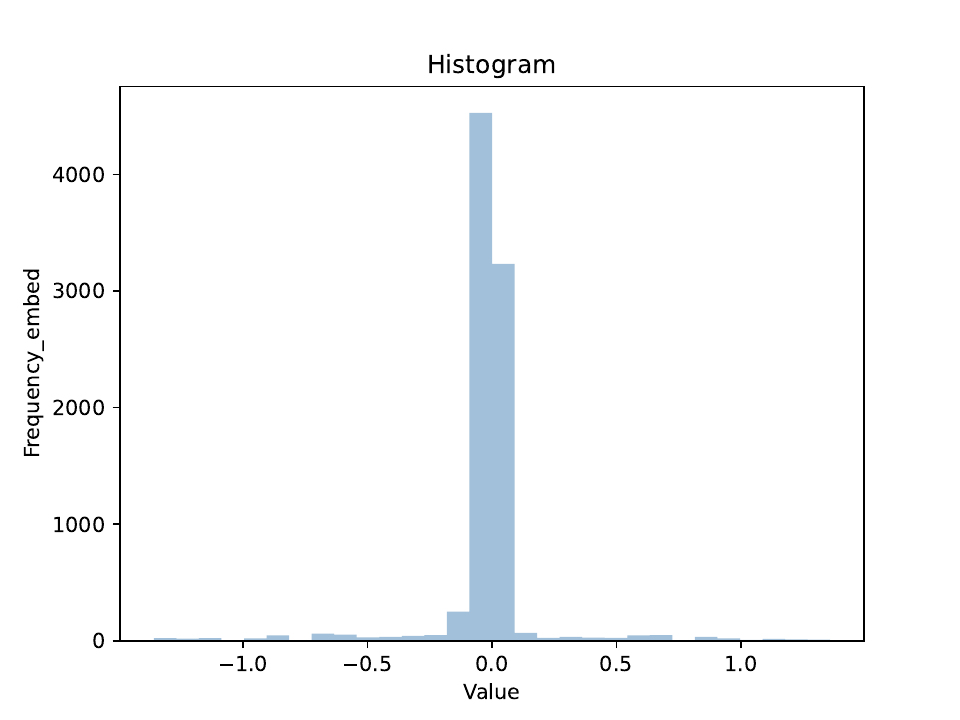} }        &           \includegraphics[width=.2\linewidth]{HistogramPicture/cnn_Histogram_embed_4_4_1_0.9.pdf}            \\ \hline \hline

\multicolumn{1}{c|}{$n$}               & \multicolumn{4}{c}{1}                                                                                             \\ \hline
\multicolumn{1}{c|}{$r$}               & \multicolumn{4}{c}{16}                                                                                            \\ \hline
\multicolumn{1}{c|}{$\Delta$}           & \multicolumn{2}{c|}{0.1}                                    & \multicolumn{2}{c}{1}                               \\ \hline
\multicolumn{1}{c|}{$\alpha$}           & \multicolumn{1}{c|}{0.9}     & \multicolumn{1}{c|}{0.5}     & \multicolumn{1}{c|}{0.9}     & 0.5                  \\ \hline
\multicolumn{1}{c|}{} & \multicolumn{1}{c|}{\includegraphics[width=.2\linewidth]{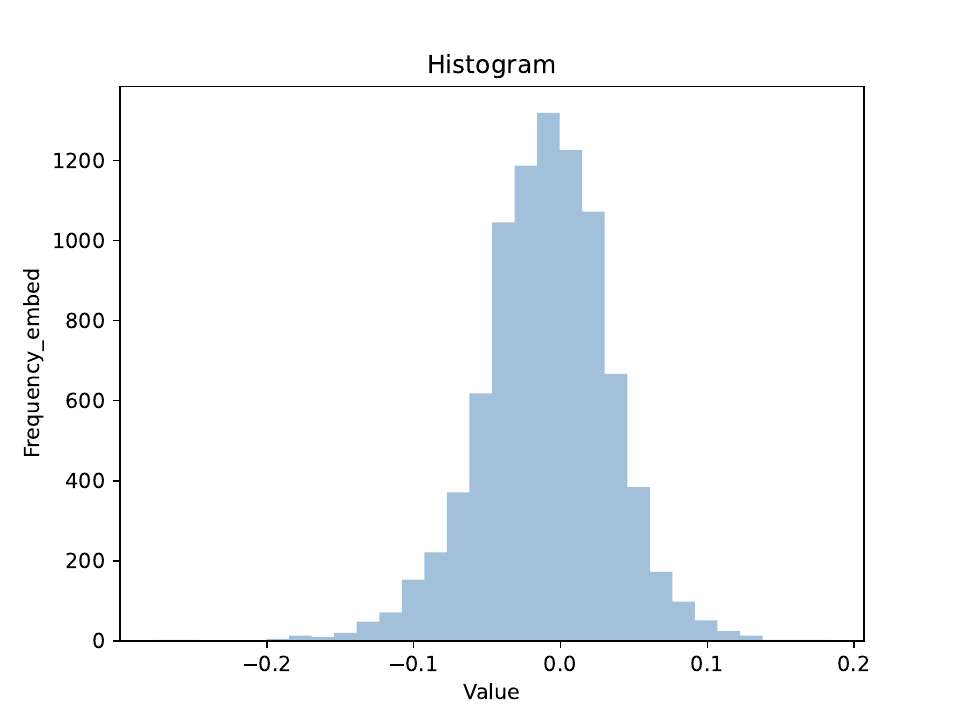} }        & \multicolumn{1}{c|}{\includegraphics[width=.2\linewidth]{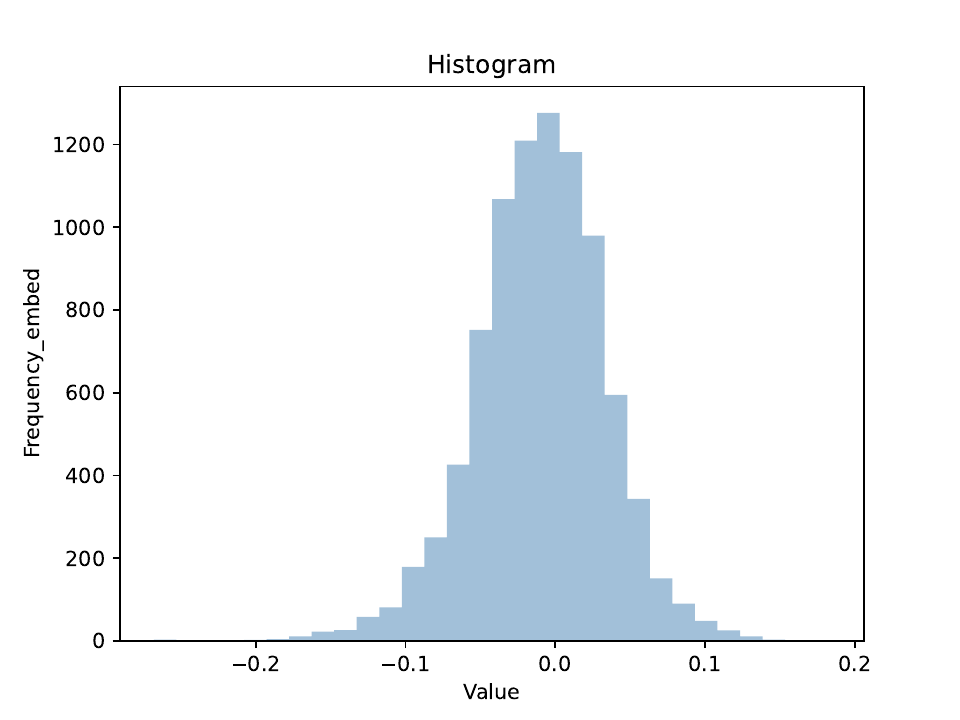} }        & \multicolumn{1}{c|}{\includegraphics[width=.2\linewidth]{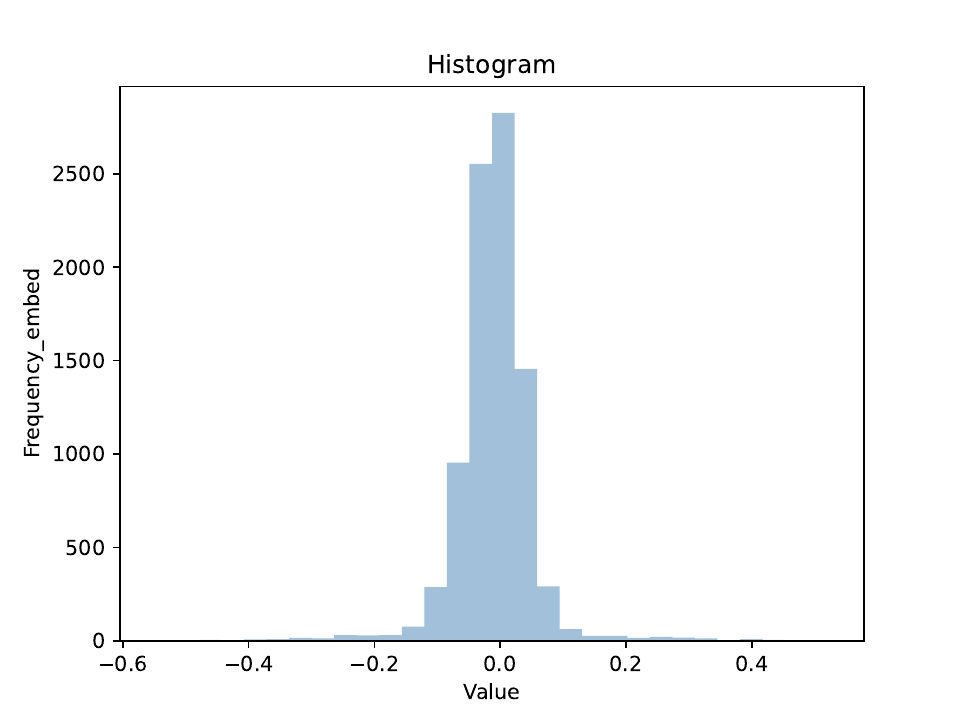} }        &          \includegraphics[width=.2\linewidth]{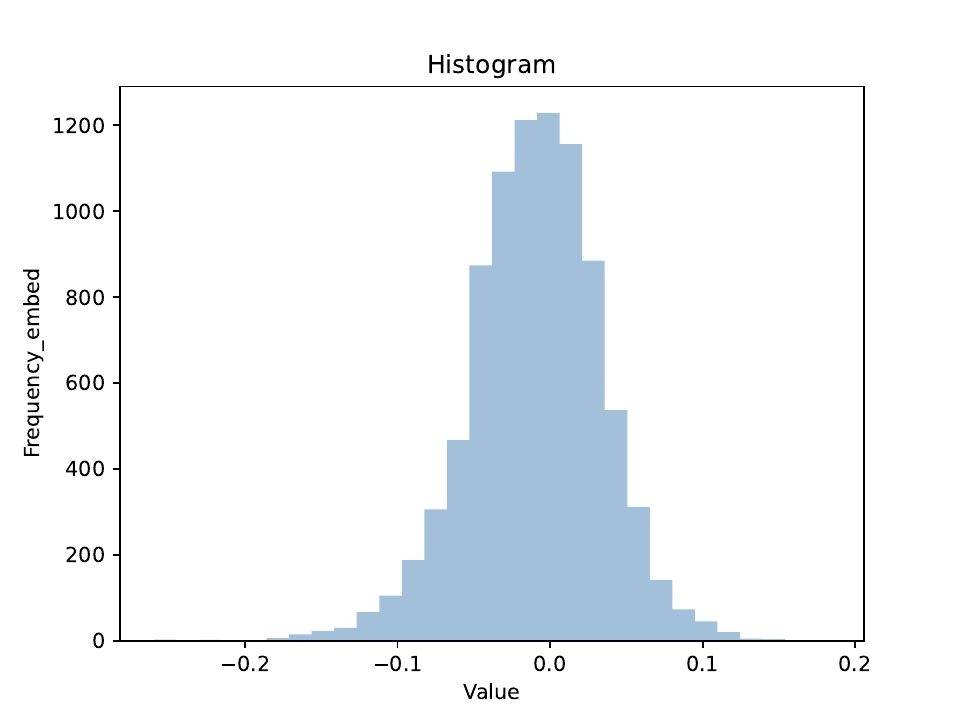}             \\ \hline  \hline

\multicolumn{1}{c|}{$n$}               & \multicolumn{4}{c}{8}                                                                                            \\ \hline
\multicolumn{1}{c|}{$r$}               & \multicolumn{4}{c}{16}                                                                                            \\ \hline
\multicolumn{1}{c|}{$\Delta$}           & \multicolumn{2}{c|}{all 0.1}                                & \multicolumn{2}{c}{all 1}                           \\ \hline
\multicolumn{1}{c|}{$\alpha$}           & \multicolumn{1}{c|}{all 0.9} & \multicolumn{1}{c|}{all 0.5} & \multicolumn{1}{c|}{all 0.9} & all 0.5              \\ \hline
\multicolumn{1}{c|}{} & \multicolumn{1}{c|}{\includegraphics[width=.2\linewidth]{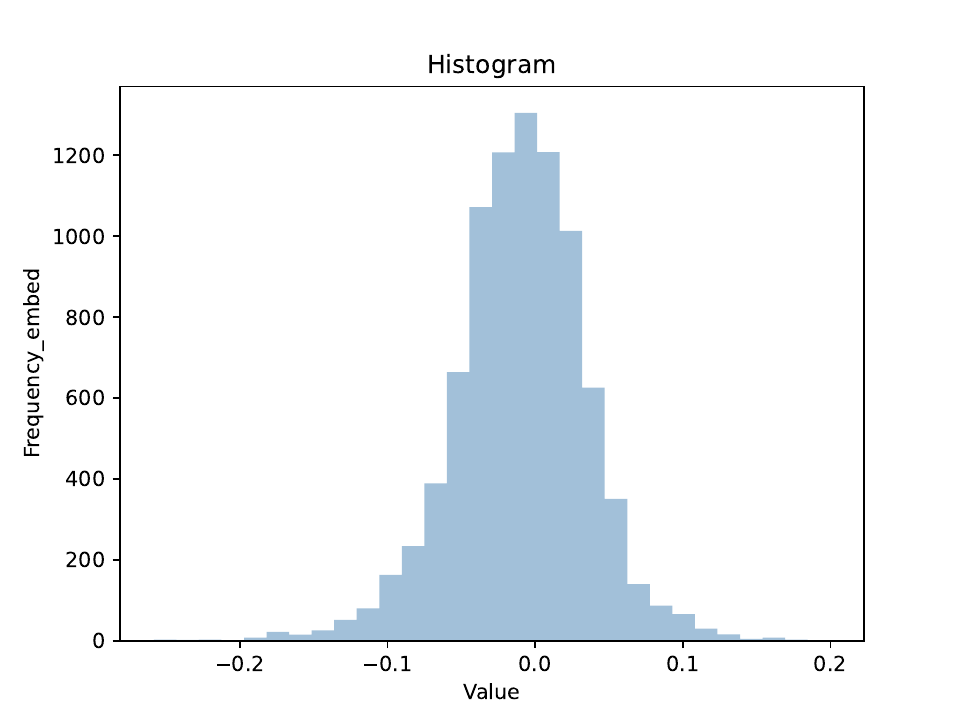} }        & \multicolumn{1}{c|}{\includegraphics[width=.2\linewidth]{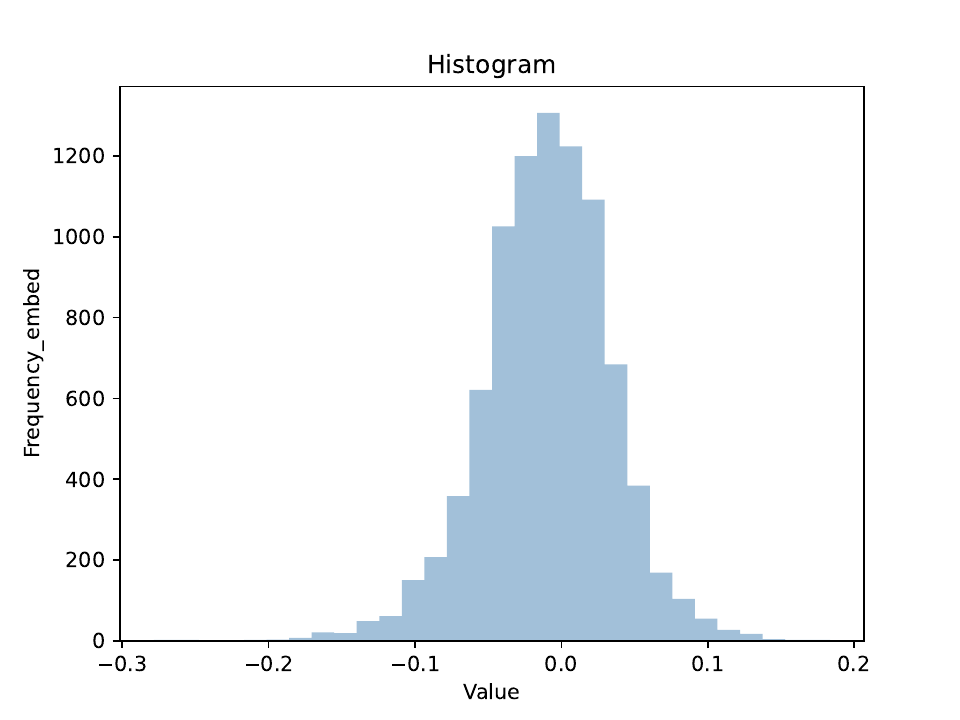 }}        & \multicolumn{1}{c|}{\includegraphics[width=.2\linewidth]{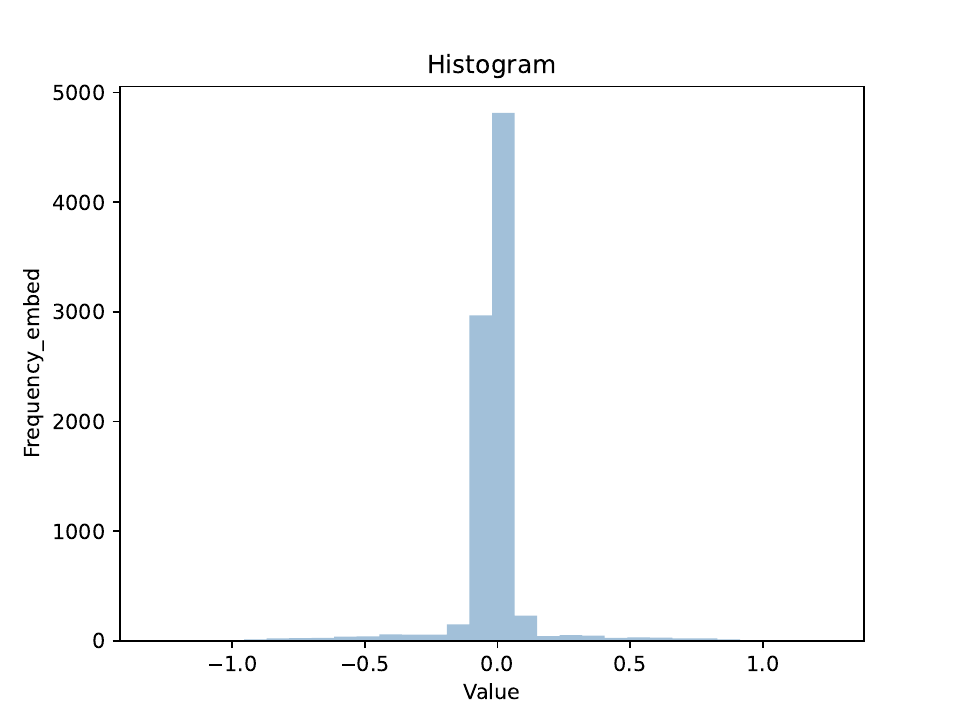} }        &       \includegraphics[width=.2\linewidth]{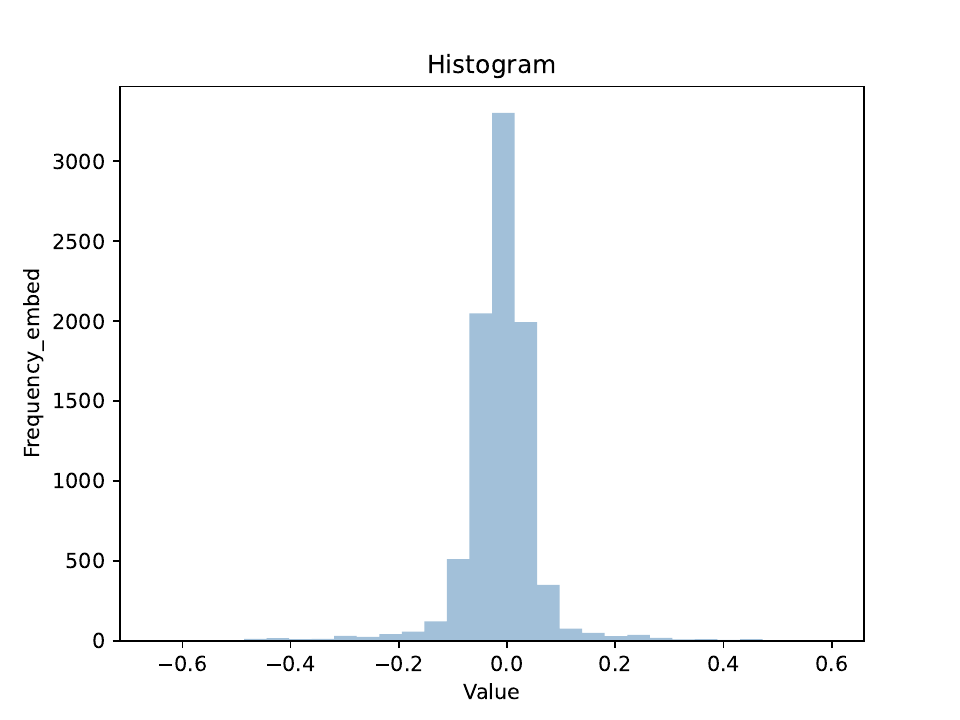}                \\ \hline
\multicolumn{1}{l}{}                 & \multicolumn{1}{l}{}         & \multicolumn{1}{l}{}         & \multicolumn{1}{l}{}         & \multicolumn{1}{l}{} \\
\multicolumn{1}{l}{}                 & \multicolumn{1}{l}{}         & \multicolumn{1}{l}{}         & \multicolumn{1}{l}{}         & \multicolumn{1}{l}{} \\
\multicolumn{1}{l}{}                 & \multicolumn{1}{l}{}         & \multicolumn{1}{l}{}         & \multicolumn{1}{l}{}         & \multicolumn{1}{l}{}
\end{tabular}
\end{table*}

\begin{table*}[]
\centering
\caption{ Histogram of watermarked MLP in different situation.}
\label{Histogram_MLP}
\begin{tabular}{ccccc}
\hline
\multicolumn{1}{c|}{$n$}               & \multicolumn{4}{c}{1}                                                                                             \\ \hline
\multicolumn{1}{c|}{$r$}               & \multicolumn{4}{c}{5}                                                                                             \\ \hline
\multicolumn{1}{c|}{$\Delta$}           & \multicolumn{2}{c|}{0.1}                                    & \multicolumn{2}{c}{1}                               \\ \hline
\multicolumn{1}{c|}{$\alpha$}           & \multicolumn{1}{c|}{0.9}     & \multicolumn{1}{c|}{0.5}     & \multicolumn{1}{c|}{0.9}     & 0.5                  \\ \hline
\multicolumn{1}{c|}{} & \multicolumn{1}{c|}{\includegraphics[width=.2\linewidth]{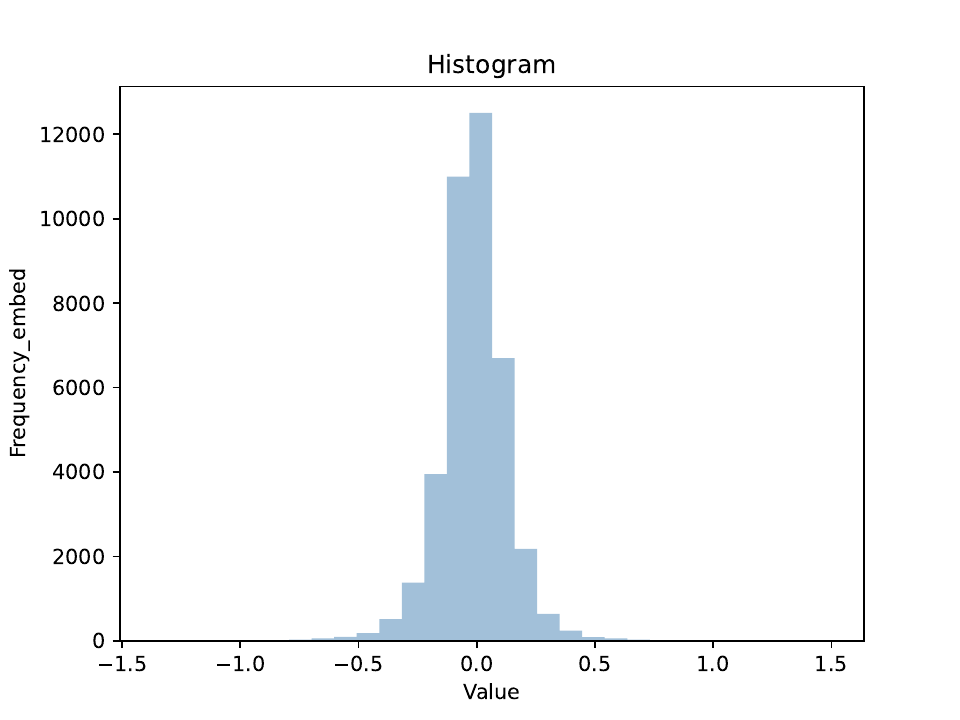} }        & \multicolumn{1}{c|}{\includegraphics[width=.2\linewidth]{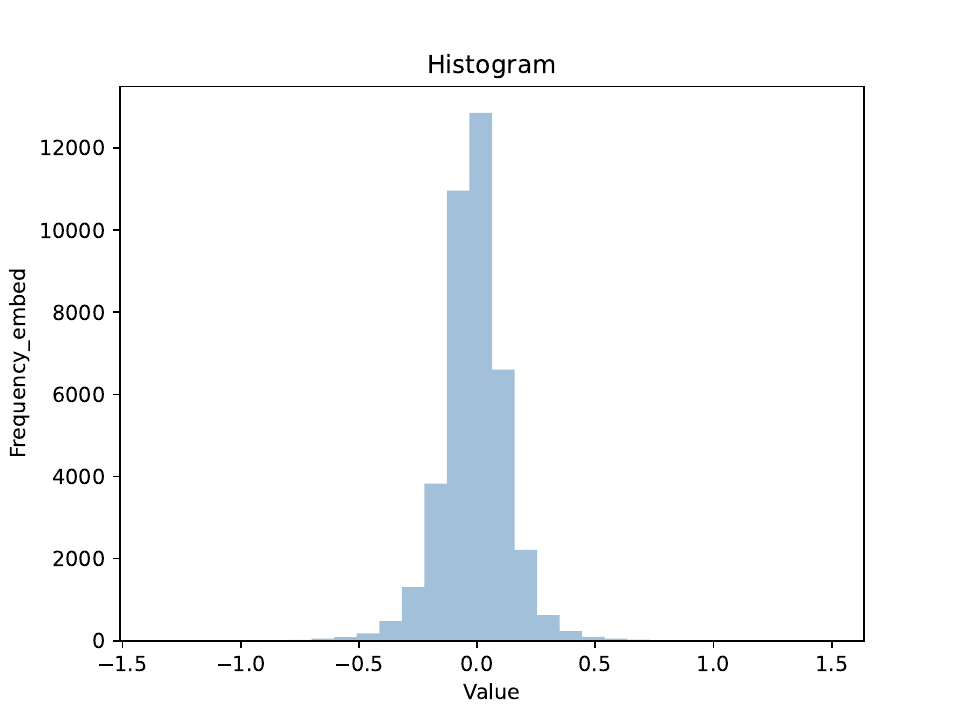} }        & \multicolumn{1}{c|}{\includegraphics[width=.2\linewidth]{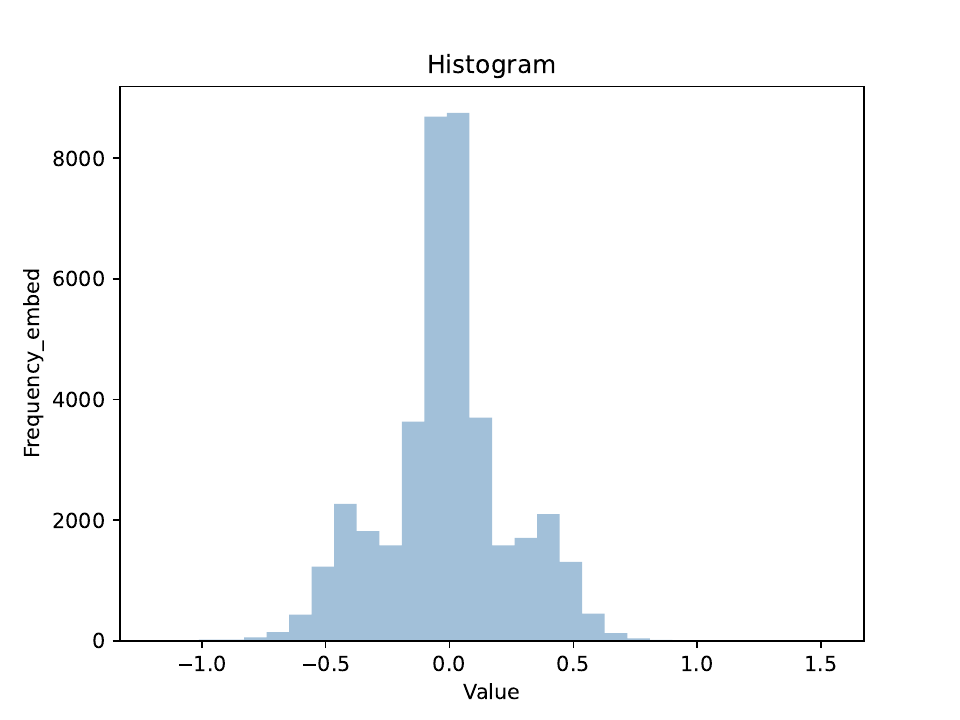} }        &         \includegraphics[width=.2\linewidth]{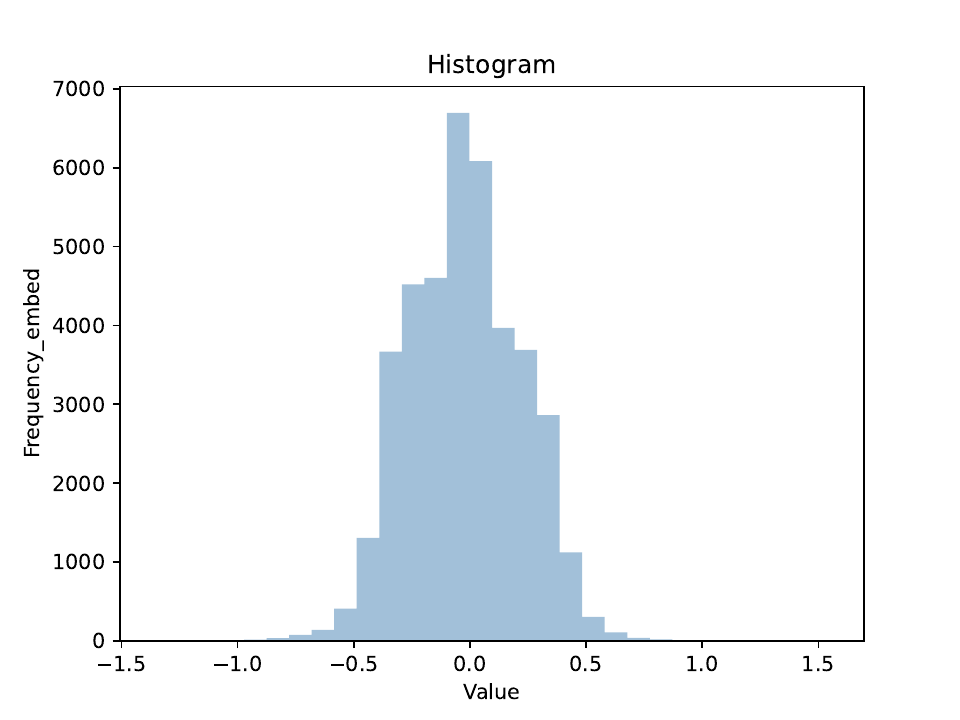}              \\ \hline \hline

\multicolumn{1}{c|}{$n$}               & \multicolumn{4}{c}{5}                                                                                             \\ \hline
\multicolumn{1}{c|}{$r$}               & \multicolumn{4}{c}{5}                                                                                             \\ \hline
\multicolumn{1}{c|}{$\Delta$}           & \multicolumn{2}{c|}{all 0.1}                                & \multicolumn{2}{c}{all 1}                           \\ \hline
\multicolumn{1}{c|}{$\alpha$}           & \multicolumn{1}{c|}{all 0.9} & \multicolumn{1}{c|}{all 0.5} & \multicolumn{1}{c|}{all 0.9} & all 0.5              \\ \hline
\multicolumn{1}{c|}{} & \multicolumn{1}{c|}{\includegraphics[width=.2\linewidth]{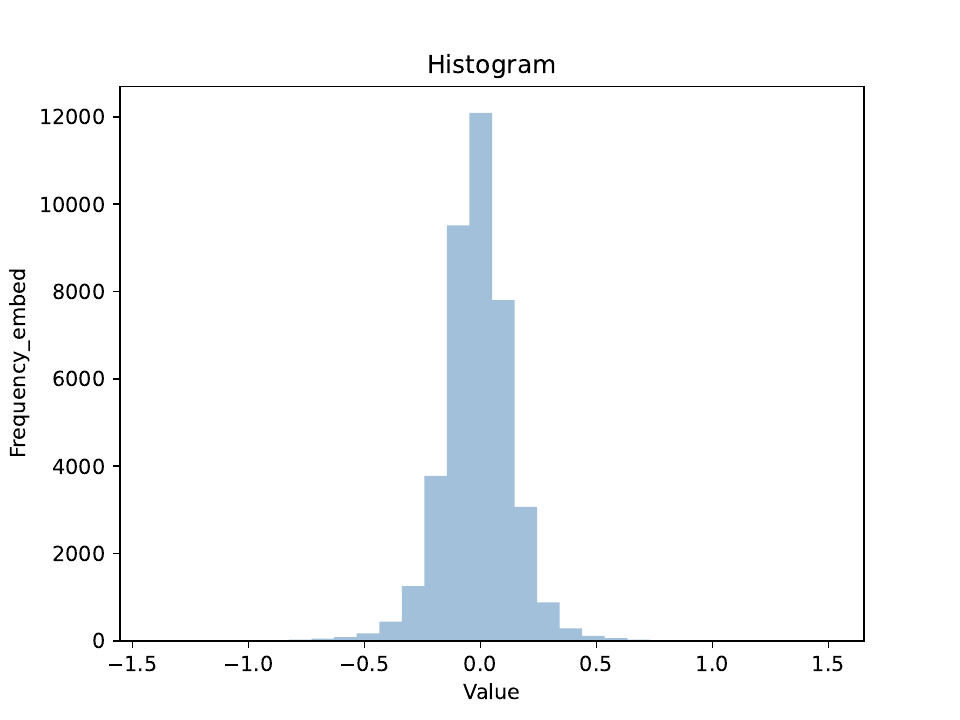} }        & \multicolumn{1}{c|}{\includegraphics[width=.2\linewidth]{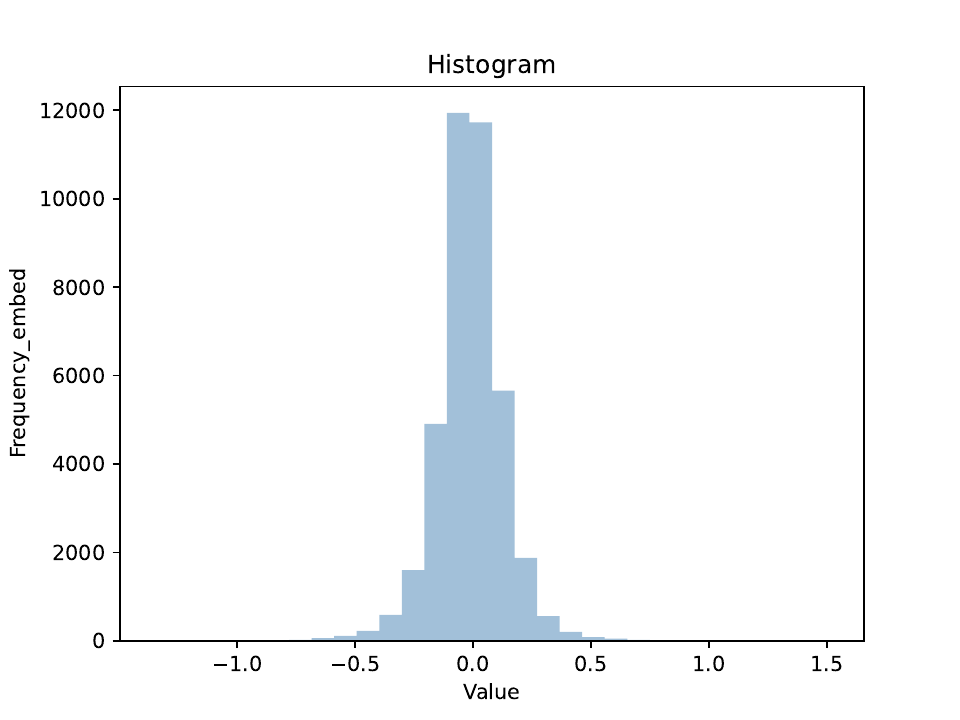} }        & \multicolumn{1}{c|}{\includegraphics[width=.2\linewidth]{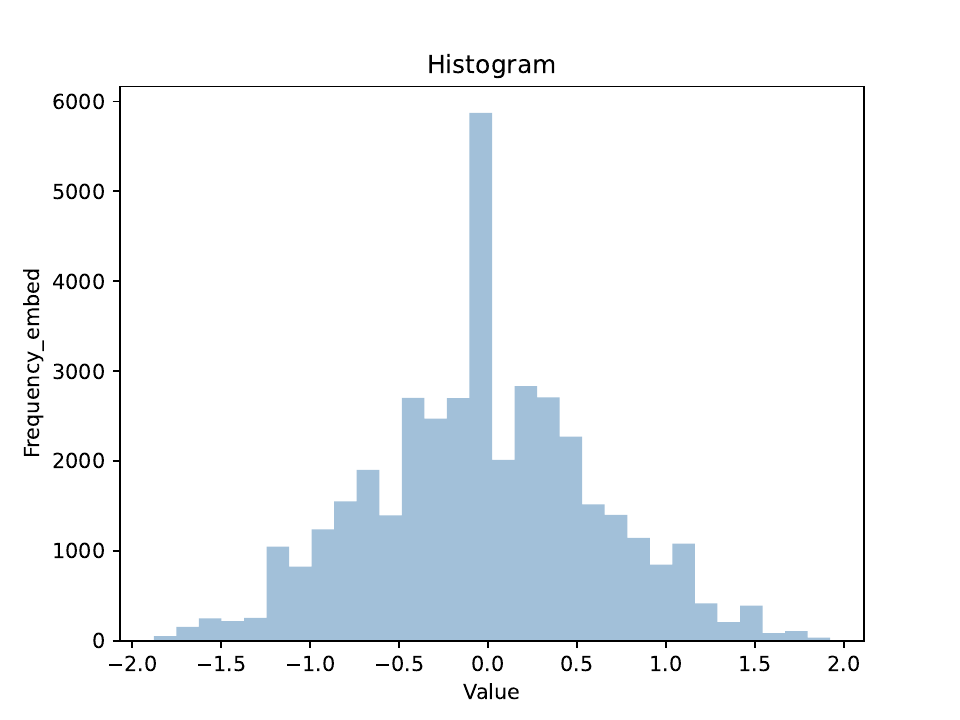} }        &           \includegraphics[width=.2\linewidth]{HistogramPicture/mlp_Histogram_embed_5_5_1_0.9.pdf}            \\ \hline \hline

\multicolumn{1}{c|}{$n$}               & \multicolumn{4}{c}{1}                                                                                             \\ \hline
\multicolumn{1}{c|}{$r$}               & \multicolumn{4}{c}{10}                                                                                            \\ \hline
\multicolumn{1}{c|}{$\Delta$}           & \multicolumn{2}{c|}{0.1}                                    & \multicolumn{2}{c}{1}                               \\ \hline
\multicolumn{1}{c|}{$\alpha$}           & \multicolumn{1}{c|}{0.9}     & \multicolumn{1}{c|}{0.5}     & \multicolumn{1}{c|}{0.9}     & 0.5                  \\ \hline
\multicolumn{1}{c|}{} & \multicolumn{1}{c|}{\includegraphics[width=.2\linewidth]{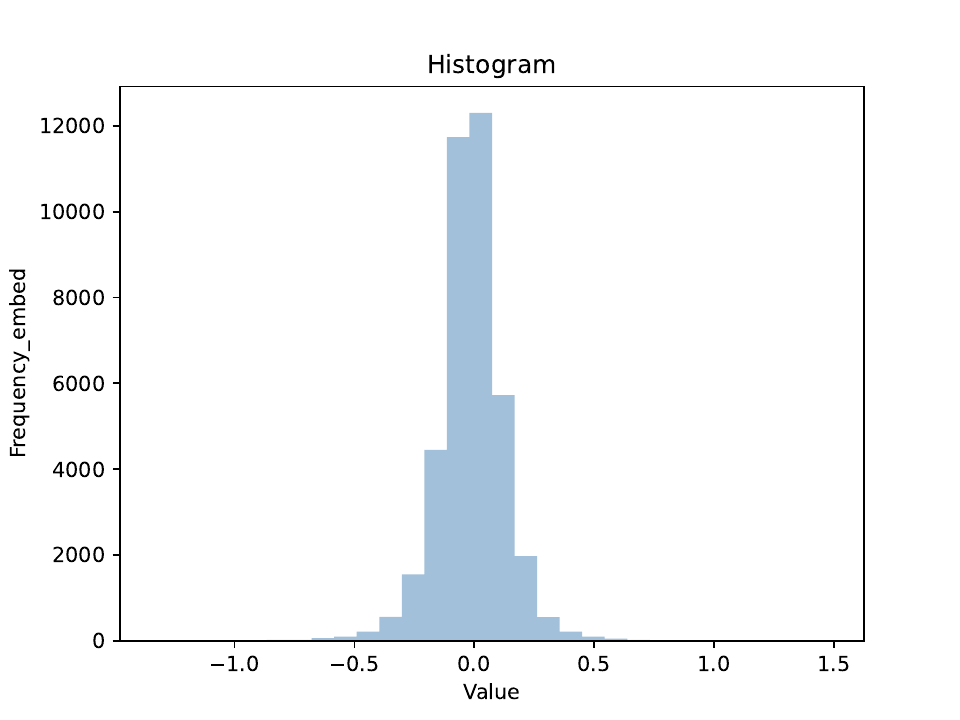} }        & \multicolumn{1}{c|}{\includegraphics[width=.2\linewidth]{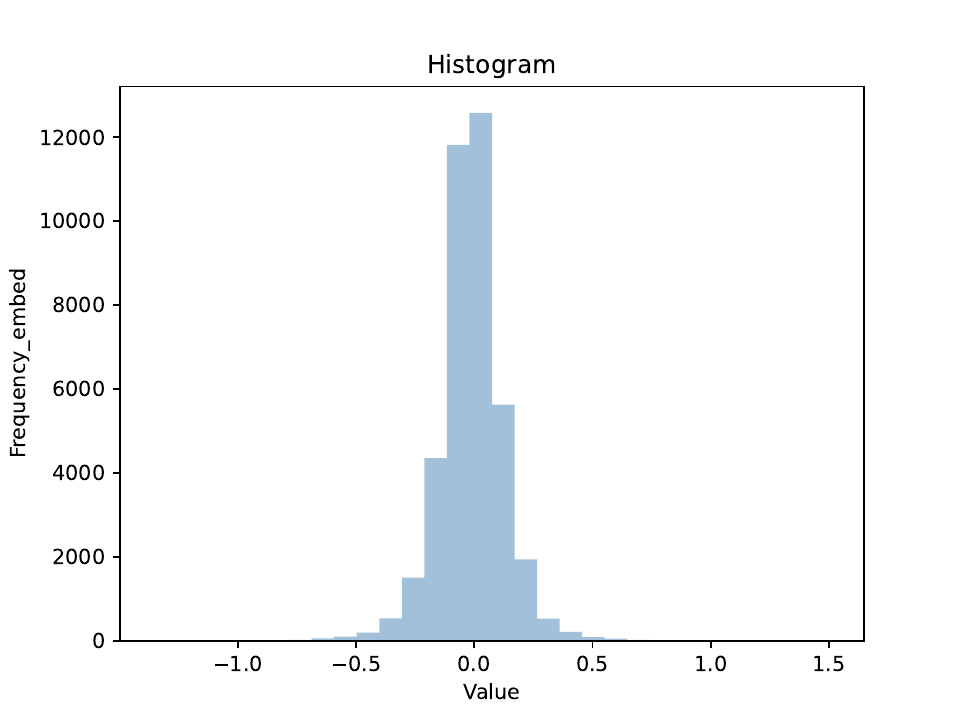} }        & \multicolumn{1}{c|}{\includegraphics[width=.2\linewidth]{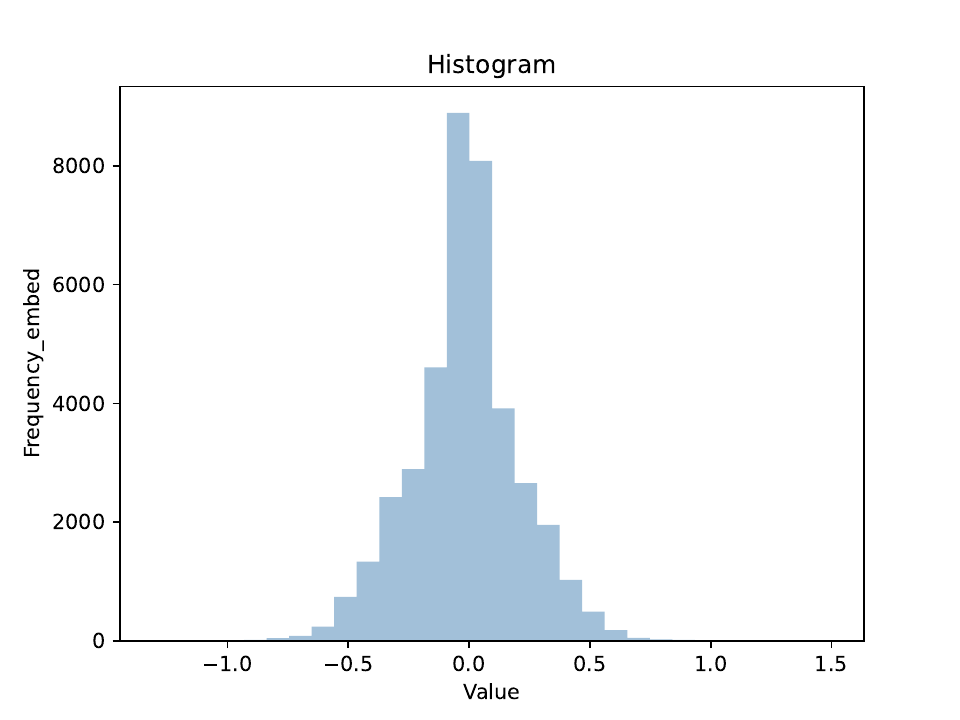} }        &          \includegraphics[width=.2\linewidth]{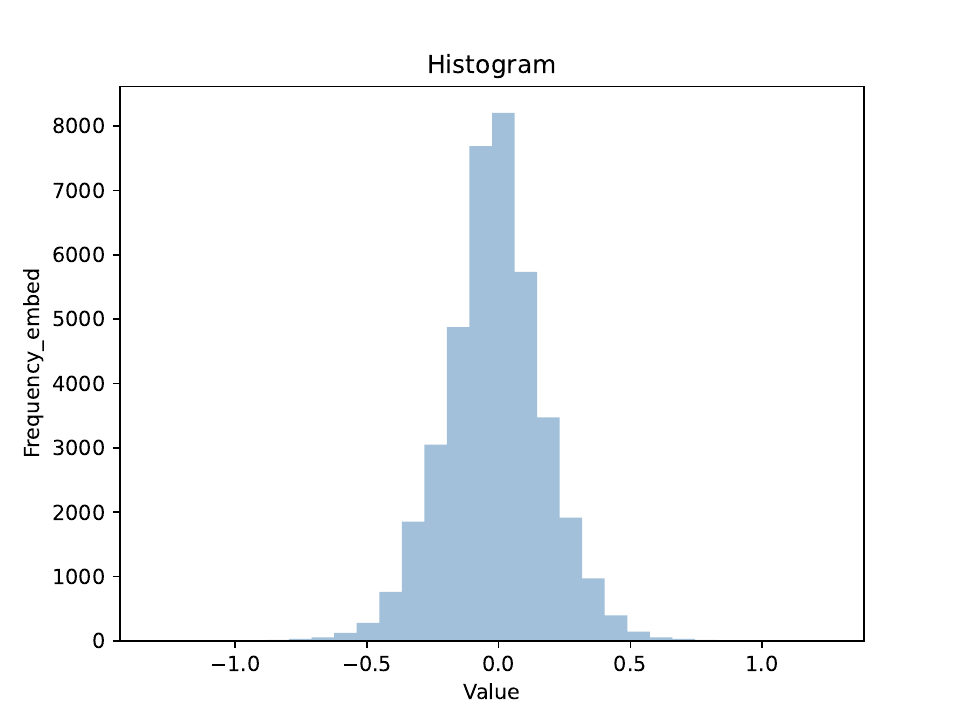}             \\ \hline  \hline

\multicolumn{1}{c|}{$n$}               & \multicolumn{4}{c}{10}                                                                                            \\ \hline
\multicolumn{1}{c|}{$r$}               & \multicolumn{4}{c}{10}                                                                                            \\ \hline
\multicolumn{1}{c|}{$\Delta$}           & \multicolumn{2}{c|}{all 0.1}                                & \multicolumn{2}{c}{all 1}                           \\ \hline
\multicolumn{1}{c|}{$\alpha$}           & \multicolumn{1}{c|}{all 0.9} & \multicolumn{1}{c|}{all 0.5} & \multicolumn{1}{c|}{all 0.9} & all 0.5              \\ \hline
\multicolumn{1}{c|}{} & \multicolumn{1}{c|}{\includegraphics[width=.2\linewidth]{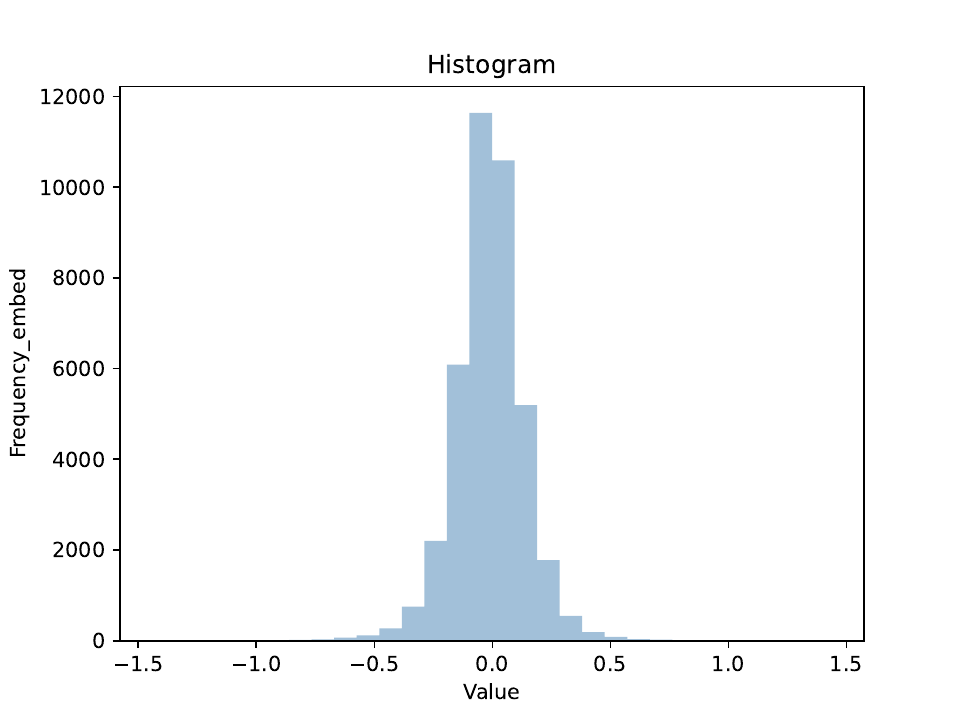} }        & \multicolumn{1}{c|}{\includegraphics[width=.2\linewidth]{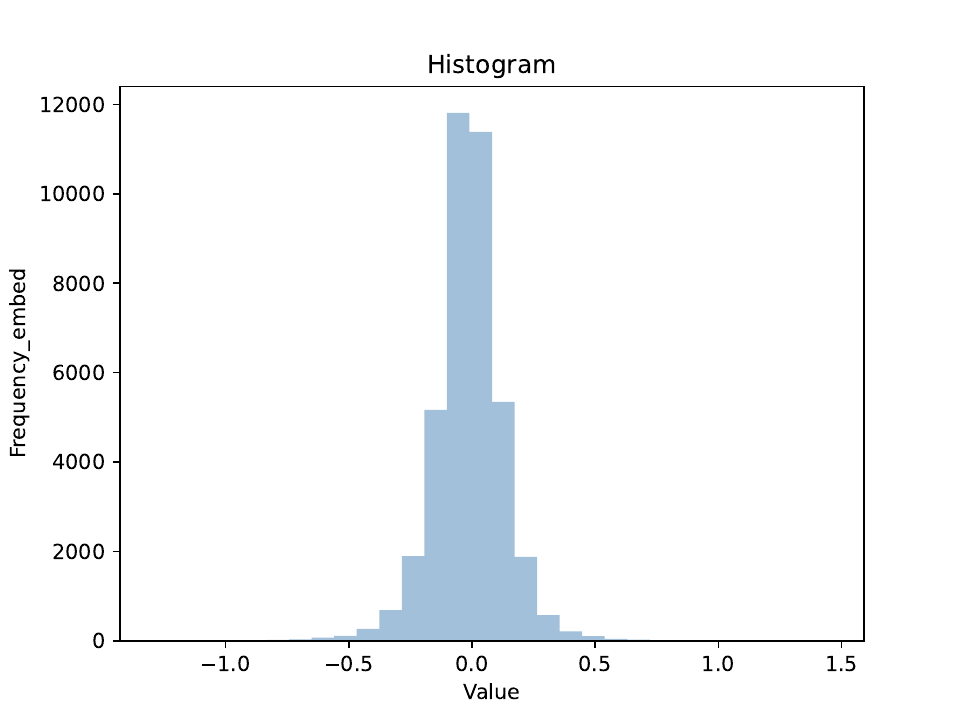 }}        & \multicolumn{1}{c|}{\includegraphics[width=.2\linewidth]{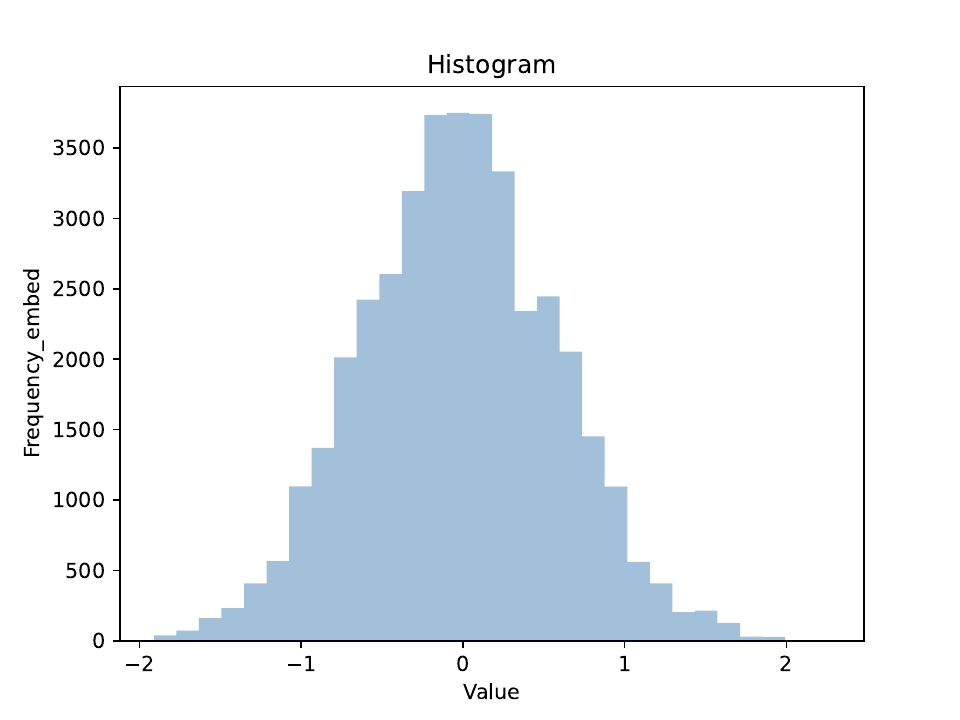} }        &       \includegraphics[width=.2\linewidth]{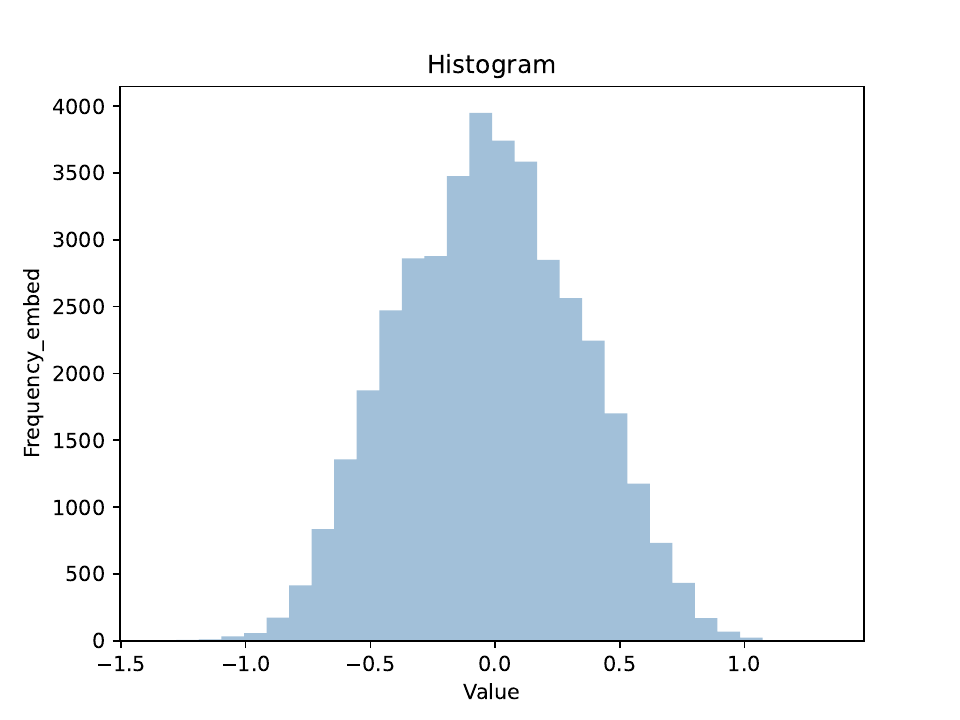}                \\ \hline
\multicolumn{1}{l}{}                 & \multicolumn{1}{l}{}         & \multicolumn{1}{l}{}         & \multicolumn{1}{l}{}         & \multicolumn{1}{l}{} \\
\multicolumn{1}{l}{}                 & \multicolumn{1}{l}{}         & \multicolumn{1}{l}{}         & \multicolumn{1}{l}{}         & \multicolumn{1}{l}{} \\
\multicolumn{1}{l}{}                 & \multicolumn{1}{l}{}         & \multicolumn{1}{l}{}         & \multicolumn{1}{l}{}         & \multicolumn{1}{l}{}
\end{tabular}
\end{table*}

\section{Conclusion}\label{Sec. conclusion}
In conclusion, this paper has introduced FedReverse, a novel multiparty reversible watermarking scheme tailored for the floating-point weights of DNNs.   
FedReverse differentiates itself by embedding watermarks from all clients into the model's weights post-training, allowing individual copyright claims and complete watermark removal from a potential buyer if he/she has obtained keys from all the clients.  FedReverse has also addressed the challenge of potential watermark conflicts among different clients through an orthogonal key generation technique, ensuring robust copyright protection. This work offers a promising ``reversible'' solution to safeguard intellectual property in the ever-expanding realm of DNN.

\bibliographystyle{IEEEtranMine}
\bibliography{reference}

\end{document}